\documentclass[conference]{IEEEtran}

\IEEEoverridecommandlockouts

\usepackage{amsmath,amssymb,amsfonts}
\usepackage{amsthm}
\usepackage{nicefrac}
\newtheorem{theorem}{Theorem}
\newtheorem{corollary}{Corollary}
\usepackage{algorithmic}
\usepackage{graphicx}
\usepackage{textcomp}
\usepackage[table]{xcolor}
\usepackage{multirow}
\usepackage{makecell}
\newtheorem{proposition}{Proposition}
\newtheorem{lemma}{Lemma}

\def\BibTeX{{\rm B\kern-.05em{\sc i\kern-.025em b}\kern-.08em
    T\kern-.1667em\lower.7ex\hbox{E}\kern-.125emX}}

\usepackage{subfig}
\usepackage[nolist,nohyperlinks]{acronym}
\usepackage{siunitx}
\usepackage{cite}
\usepackage{todonotes}
\usepackage{microtype}
\usepackage{tikz}
\usetikzlibrary{decorations.pathreplacing}

\renewcommand{\baselinestretch}{.96}

\newcommand{\Conv}{%
    \mathop{\scalebox{1.5}{\raisebox{-0.2ex}{$\circledast$}}}
}

\definecolor{rbRed}{HTML}{FE433C}
\definecolor{rbPink}{HTML}{F31D64}
\definecolor{rbMagenta}{HTML}{A224AD}
\definecolor{rbViolet}{HTML}{6A38B3}
\definecolor{rbIndigo}{HTML}{3C50B1}
\definecolor{rbBlue}{HTML}{0095EF}

\acrodef{1-D}[1-D]{one-dimensional}
\acrodef{3-D}[3-D]{three-dimensional}
\acrodef{MC}[MC]{synthetic molecular communication}
\acrodef{MIGHT}[MIGHT]{Mixture of Inverse Gaussians for Hemodynamic Transport}
\acrodef{EM}[EM]{electromagnetic wave}
\acrodef{CGF}[CGF]{cumulant generating function}
\acrodef{CVS}[CVS]{cardiovascular system}
\acrodef{CIR}[CIR]{channel impulse response}
\acrodef{FPT}[FPT]{first passage time}
\acrodef{IoBNT}[IoBNT]{Internet of Bio-Nano Things}
\acrodef{SM}{signaling molecule}
\acrodefplural{SM}[SMs]{signaling molecules}
\acrodef{IG}[IG]{inverse Gaussian}
\acrodef{MGF}[MGF]{moment generating function}
\acrodef{PDF}[PDF]{probability density function}
\acrodef{RV}[RV]{random variable}
\acrodef{SNR}[SNR]{signal-to-noise ratio}
\acrodef{Tx}[Tx]{transmitter}
\acrodef{PBPK}{physiologically based pharmacokinetic}
\acrodef{Rx}[Rx]{receiver}
\acrodef{RMSE}{root mean square error}
\acrodefplural{RMSE}[RMSEs]{root mean square errors}
\acrodef{UCA}[UCA]{uniform concentration assumption}
\acrodef{VN}[VN]{vessel network}
\acrodef{wrt}[w.r.t.]{with respect to}
\acrodefplural{IG}[IGs]{inverse Gaussians}
\acrodefplural{PDF}[PDFs]{probability density functions}
\acrodefplural{RV}[RVs]{random variables}
\acrodefplural{SNR}[SNRs]{signal-to-noise ratios}
\acrodefplural{Rx}[Rxs]{receivers}
\acrodefplural{GFP}[GFPs]{green fluorescent proteins}
\acrodefplural{CIR}[CIRs]{channel impulse responses}
\acrodefplural{VN}[VNs]{vessel networks}
\acrodef{ROI}[ROI]{region of interest}
\acrodef{ICG}[ICG]{indocyanine green}

\begin{document}
\bstctlcite{IEEEexample:BSTcontrol}  

\title{Multiscale MIGHT: Modeling Local and Systemic Particle Transport in the Cardiovascular System%
}

\author{
\IEEEauthorblockN{\scalebox{1}{%
Theofilos Symeonidis\textsuperscript{\textdagger}, Timo Jakumeit\textsuperscript{\textdagger}, Fardad Vakilipoor,
Robert Schober, and Maximilian Schäfer
\thanks{\textsuperscript{\textdagger}Co-first authorship}
\thanks{This work was funded in part by the German Federal Ministry of Research, Technology and Space (BMFTR) through Project Internet of Bio-Nano-Things (IoBNT) -- grant number 16KIS1987, in part by the German Research Foundation (Deutsche Forschungsgemeinschaft, DFG) under GRK 2950 -- ProjectID 509922606 and under grant number SCHA 2350/2-1, in part by the European Union's Horizon Europe -- HORIZON-EIC-2024-PATHFINDEROPEN-01 under grant agreement Project N. 101185661, and in part by the Horizon Europe Marie Skodowska Curie Actions (MSCA)-UNITE under Project 101129618.}
}}\\[-0.4cm]
\IEEEauthorblockA{
Friedrich-Alexander-Universität Erlangen-Nürnberg (FAU),
Erlangen, Germany
}
}

\maketitle
\begin{abstract}
Analytical models of \ac{SM} transport in the \ac{CVS} force a choice between whole-body scope and vessel-level resolution: Pharmacokinetic models for systemic transport lump each organ and lack small-scale structure, whereas \ac{MC} channel models resolve branched \acp{VN} only for a single pass or capture recirculation only in topologically simple loops, and only partially include \ac{SM} degradation.
To close this gap, we propose \textit{Multiscale \acs{MIGHT}}, an extension of the \textit{\ac{MIGHT}} model that jointly captures branched topologies, closed-loop circulation, organ clearance, and chemical degradation of \acp{SM}.
First, we connect the ends of a branched \ac{VN} to form a closed loop with local and global \ac{SM} degradation, model the closed-loop channel impulse response for arbitrary \ac{Tx} and \ac{Rx} positions as a Laplace-domain feedback system, and approximate it in closed form by an inverse Gaussian mixture over multi-pass trajectories.
The model is validated against finite-element simulations in COMSOL.
Second, we set up a whole-body model, where an exemplary local \ac{VN}, resolved at vessel-level detail, is embedded into the remaining systemic circulation, represented by lumped organ blocks.
We simulate whole-body \ac{SM} transport for various \ac{Tx} sites and \ac{SM} species.
Our results show that, at an \ac{Rx} in a local \ac{VN}, only the initial pass produces a distinct peak in the \ac{SM} concentration, significantly shaped by the local vasculature in case of small \ac{Tx}--\ac{Rx} distances; subsequent passes arrive strongly dispersed and are shaped almost entirely by organ clearance and chemical degradation rather than by the local \ac{VN} topology.
\end{abstract}

\acresetall

\section{Introduction}
\label{sec:Introduction}
The \ac{CVS} distributes oxygen, nutrients, hormones, other endogenous \acp{SM}, and administered drugs throughout the body and is thus the primary domain for diagnostic and therapeutic in-body \ac{MC} applications.
Designing such applications requires analytical models of \ac{SM} transport in the \ac{CVS}.
However, to date, no such model jointly captures the three properties that arguably shape this transport most strongly:
(P1)~\textit{Branched topology:} The \ac{CVS} is a complex \ac{VN} with many parallel paths, so \acp{SM} reach an observation site along paths with widely differing travel times~\cite{Weiss2009}.
(P2)~\textit{Closed loop:} \acp{SM} recirculate, possibly along different paths in successive circulations~\cite{Weiss2009}.
(P3)~\textit{Degradation:} \acp{SM} are cleared locally by organs, e.g., liver and kidneys~\cite{Rowland2011PBPK}, and degraded globally, e.g., by enzymatic cleavage.

In a model, resolving the entire \ac{CVS} down to individual vessels is intractable.
For this reason, \textit{\ac{PBPK}} models abstract organs and tissues into well-mixed compartments~\cite{Rowland2011PBPK}.
Such models cover the whole body, including P2 and P3, but a lumped organ does not capture the internal structure and thus cannot predict the signal at a \ac{Rx} in a specific \ac{ROI}, i.e., a local \ac{VN}, thus missing P1.

In the \ac{MC} literature, in turn, P1 and P2 have been treated separately so far.
Early works modeled transport in single vessels, but the topology of a \ac{VN} strongly affects transport~\cite{Jakumeit2026}.
Hence, detailed \ac{VN} models have been proposed, cf., e.g., \cite{Chahibi2015}.
In this category, we previously proposed the \textit{\ac{MIGHT}} model~\cite{Jakumeit2026}, which offers a closed-form analytical description of \ac{SM} transport in \acp{VN}, but describes only a single pass through an open-ended \ac{VN} without degradation or recirculation, thus capturing P1 but missing P2 and P3.
Existing circulatory \ac{MC} models, on the other hand, capture the closed-loop characteristics (P2)~\cite{Vakilipoor2025} and partly degradation (P3)~\cite{Brand2025}, but represent the circulation as a topologically simple loop, missing~P1.

Existing models are thus whole-body but spatially unresolved~\cite{Rowland2011PBPK}, topology-aware but single-pass~\cite{Chahibi2015, Jakumeit2026}, or closed-loop but topologically simple~\cite{Vakilipoor2025, Brand2025}.
We close this gap by proposing the \textit{Multiscale \ac{MIGHT}} model, which retains lumped segments for the systemic circulation, but resolves the local circulation in an \ac{ROI}, e.g., an organ or the \ac{VN} around an \ac{Rx}, down to individual vessels, cf.~Fig.~\ref{fig:system_model}.
Multiscale \ac{MIGHT} is a closed-loop extension of \ac{MIGHT}~\cite{Jakumeit2026} with local and global degradation, and thereby captures P1--P3 jointly and accurately.

Our contributions are as follows:
(I)~We add local and global degradation to \ac{MIGHT} and derive the exact closed-loop \ac{CIR} for arbitrary \ac{Tx} and \ac{Rx} positions in the Laplace-domain, the expected number of \ac{Rx} crossings, the effective \ac{SM} elimination rate, and a closed-form \ac{IG}-mixture approximation in the time domain.
(II)~We validate the proposed model against numerical simulations in COMSOL Multiphysics\textsuperscript{\textregistered} and a convolution-based model for several closed-loop \acp{VN} with local and global degradation.
(III)~Using the proposed model, we derive a whole-body multiscale model comprised of lumped segments for systemic circulation and the \ac{ROI} represented as a resolved \ac{VN}. 
(IV)~We simulate \ac{SM} transport in the human body, reproduce known transport characteristics from compartmental models, and illustrate the interplay between local and systemic transport.

Sections~\ref{sec:ClosedLoopMIGHT}--\ref{sec:WholeBody} present the closed-loop model, its validation, the whole-body model topology, and predictions about \ac{SM} transport in the whole body, respectively.
Section~\ref{sec:Conclusion} concludes the paper.

\section{Closed-Loop MIGHT with Degradation}
\label{sec:ClosedLoopMIGHT}
In this section, we extend \ac{MIGHT}~\cite{Jakumeit2026} by adding degradation, extending it to closed-loop \acp{VN}, and deriving the closed-loop \ac{CIR} between arbitrarily placed \acp{Tx} and \acp{Rx}. We call the edges \emph{segments}; here, all segments are pipes, and Section~\ref{sec:Multiscale} adds lumped segments to represent organ passage.

\subsection{Closed-Loop Vessel Network and Assumptions}
\label{subsec:network}

\emph{Network.}
A \ac{VN} is a directed multigraph of pipes $p_i$, $i\in\{1,\ldots,E\}$, with length $l_i$, radius $r_i$, volume $V_i=\pi r_i^2l_i$, time-invariant flow rate $Q_i>0$, and mean velocity $u_i=Q_i/(\pi r_i^2)$, here with a single inlet pipe starting at $n_{\mathrm{in}}$ (inflow rate $Q$) and a single outlet node $n_{\mathrm{out}}$ (see~\cite[Sec.~II-A]{Jakumeit2026}). The $Q_i$-values follow from the hydraulic circuit model in~\cite[Sec.~II-C]{Jakumeit2026}. We construct a closed-loop \ac{VN} by joining $n_{\mathrm{out}}$ and $n_{\mathrm{in}}$ at the \emph{reference node} $n_0$.
Nodes with several outflow pipes are bifurcations $b_m$ with the set of outflow pipes $\mathcal O(b_m)$; $V=\sum_iV_i$ is the total volume of the \ac{VN}.

\emph{Tx \& Rx.}
The \ac{Tx} releases \acp{SM} at longitudinal position $z_{\mathrm{Tx}}$ in $p_q$, and the \ac{Rx} (length $l_{\mathrm{Rx}}$) is centered at $z_{\mathrm{Rx}}\neq z_{\mathrm{Tx}}$ in $p_\nu$. 

\emph{Passes \& crossings.}
In a closed-loop \ac{VN}, pass $0$ starts at the \ac{Tx}, pass $L\geq1$ starts at $n_{\mathrm{in}}$ after the $L$-th passage through $n_0$, and each pass either reaches $n_{\mathrm{out}}$ or terminates upon degradation. A pass from $n_{\mathrm{in}}$ to $n_{\mathrm{out}}$ constitutes a \emph{circulation}. 

\emph{Paths.}
As in~\cite[Sec.~II-A]{Jakumeit2026}, $\mathcal P(n_a,n_b)$ is the set of directed paths from $n_a$ to $n_b$ in the open-loop \ac{VN}. Path $P_k$ is defined as $P_k=\{p_i\mid i\in\mathcal E_k\}\cup\{b_m\mid m\in\mathcal B_k\}$, where $\mathcal E_k$ and $\mathcal B_k$ are the index sets of pipes and bifurcations in $P_k$, respectively.  
Moreover, $\mathcal S(p_i)$ and $\mathcal D(p_i)$ are the source and destination nodes of $p_i$. Conditioned on entering the first pipe of $P_k$, an \ac{SM} follows the remaining path with probability~\cite[Eq.~(20)]{Jakumeit2026}
\begin{align}
    \gamma_{P_k}
    \hspace*{-.5mm}=\hspace*{-.5mm}\prod_{p_j,b_m\in P_k;\,  p_j\in\mathcal O(b_m)}
     \frac{Q_j}{\sum_{p_{j'}\in\mathcal O(b_m)}Q_{j'}}, \hspace{1mm} \gamma_{P_k}\in [0,1].
    \label{eq:path_fraction}
\end{align}
To describe closed-loop \acp{VN}, we define the paths within one pass from the \ac{Tx} in $p_q$ to position $z$ in $p_\nu$ and to $n_{\mathrm{out}}$, and from $n_{\mathrm{in}}$ to position $z$ in $p_\nu$ and to $n_{\mathrm{out}}$ as follows
\begin{equation}
\begin{aligned}
    \mathcal P_{\mathrm{Tx},\nu}&=\{P_k\in\mathcal P(\mathcal S(p_q),\mathcal D(p_\nu))\mid p_q,p_\nu\in P_k\},\\
    \mathcal P_{\mathrm{Tx},\mathrm{out}}&=\{P_k\in\mathcal P(\mathcal S(p_q),n_{\mathrm{out}})\mid p_q\in P_k\},\\
    \mathcal P_{\mathrm{in},\nu}&=\{P_k\in\mathcal P(n_{\mathrm{in}},\mathcal D(p_\nu))\mid p_\nu\in P_k\},\\
    \mathcal P_{\mathrm{loop}}&=\mathcal P(n_{\mathrm{in}},n_{\mathrm{out}}).
\end{aligned}
    \label{eq:path_sets}
\end{equation}
By flow conservation, the path fractions in~\eqref{eq:path_fraction} sum to one over $\mathcal P_{\mathrm{Tx},\mathrm{out}}$ and $\mathcal P_{\mathrm{loop}}$ and to $Q_\nu/Q$ over $\mathcal P_{\mathrm{in},\nu}$.

To obtain the \ac{CIR} for closed-loop \acp{VN} with degradation, we make the following assumptions:
    
    \noindent(A1) \emph{Transport:} In a rigid pipe $p_i$, laminar, time-invariant flow yields \ac{1-D} advection-diffusion-driven transport with velocity $u_i$ and $D_{\mathrm{eff},i}=D+r_i^2u_i^2/(48D)$ in the Aris--Taylor regime, with diffusion coefficient $D$~\cite[Sec.~II]{Jakumeit2026}.
    
    \noindent(A2) \emph{Degradation:} \acp{SM} in pipe $p_i$ can degrade irreversibly with first-order reaction rate $\kappa_i\geq0$.

    \noindent(A3) \emph{Routing:} Bifurcations neither cause delay nor degradation, and \acp{SM} split in proportion to the outflow rates, independently of their history. Transit times are independent across segments and repeated traversals~\cite[Sec.~III-B]{Jakumeit2026}.
    
    \noindent(A4) \emph{Reception:} \acp{Rx} are transparent, the \ac{UCA} holds, and the diffusive flux at the \acp{Rx} is negligible compared to the advective flux~\cite{Jakumeit2026}.

\subsection{Pipe and Path Fluxes with Degradation}
\label{subsec:segments}

The \emph{pipe flux} $j_i(z,t)$ is the flux at $z\in(0,l_i]$ in pipe $p_i$ for one \ac{SM} entering at $z=0$ and $t=0$. Without degradation, it is the density of the \ac{SM}'s \ac{FPT} $T_i(z)$ at $z$, which is \ac{IG}-distributed~\cite[Theorem~1]{Jakumeit2026}, with the \ac{IG} density $f_{\mathrm{IG}}(t;\mu,\theta)=\frac{\mu}{\sqrt{2\pi\theta t^3}}\exp\bigl(-\frac{(t-\mu)^2}{2\theta t}\bigr)$ with mean $\mu$ and scale parameter $\theta$. With degradation in $p_i$, an \ac{SM} may degrade before reaching $z$, such that $T_i(z)$ becomes defective, i.e., its density is an exponentially tilted \ac{IG}, which is again a scaled \ac{IG}~\cite{Chhikara1989}.

\begin{lemma}[]
\label{lem:segment}
Under (A1) and (A2), an \ac{SM} entering $p_i$ at $z=0$ and $t=0$ reaches $z$ before degrading with probability
\begin{equation}
    S_i(z)=\exp\Bigl(\frac{z\,(u_i-\tilde u_i)}{2D_{\mathrm{eff},i}}\Bigr),\qquad
    \tilde u_i=\sqrt{u_i^2+4D_{\mathrm{eff},i}\kappa_i},
    \label{eq:segment_parameters}
\end{equation}
and, conditioned on this event, its \ac{FPT} is $T_i(z)\sim\mathrm{IG}(\mu_i(z),\theta_i)$ with $\mu_i(z)=z/\tilde u_i$ and $\theta_i=2D_{\mathrm{eff},i}/\tilde u_i^2$.

\noindent Hence, the pipe flux $j_i(z,t)$ in $\si{\per\second}$ at position $z$ follows as\footnote{For $\kappa_i=0$, $\tilde u_i=u_i$ and $S_i(z)=1$, which recovers~\cite[Theorem~1]{Jakumeit2026}.}
\begin{equation}
j_i(z,t)=S_i(z)\,f_{\mathrm{IG}}\bigl(t;\mu_i(z),\theta_i\bigr).
    \label{eq:segment_response}
\end{equation}
\end{lemma}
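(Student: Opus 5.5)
Under (A1) and (A2), the concentration in $p_i$ satisfies the 1-D advection--diffusion--reaction equation $\partial_t c = D_{\mathrm{eff},i}\partial_z^2 c - u_i\partial_z c - \kappa_i c$. The plan is to reduce this to the degradation-free case of \cite[Theorem~1]{Jakumeit2026} by an exponential tilt.

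The key observation is that first-order degradation acts multiplicatively on each sample path. Let $T_i^0(z)$ be the degradation-free \ac{FPT}. Then an \ac{SM} arriving at time $t$ has survived with probability $e^{-\kappa_i t}$, independently of its trajectory. Hence the defective \ac{FPT} density, i.e., the flux, is
\begin{equation*}
j_i(z,t)=e^{-\kappa_i t}\, f_{\mathrm{IG}}\bigl(t;z/u_i,\,2D_{\mathrm{eff},i}/u_i^2\bigr),
\end{equation*}
using \cite[Theorem~1]{Jakumeit2026} for the degradation-free density. The same conclusion follows without the pathwise argument: substituting $c=e^{-\kappa_i t}\tilde c$ removes the reaction term from the PDE.

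It remains to show that this tilted density is a scaled \ac{IG}. First, rewrite the \ac{IG} density in drift--diffusion form,
\begin{equation*}
f_{\mathrm{IG}}(t;z/u_i,2D_{\mathrm{eff},i}/u_i^2)=\frac{z}{\sqrt{4\pi D_{\mathrm{eff},i}t^3}}\exp\Bigl(-\frac{(z-u_it)^2}{4D_{\mathrm{eff},i}t}\Bigr).
\end{equation*}
Second, expand the exponent together with $-\kappa_i t$:
\begin{equation*}
-\frac{z^2}{4D_{\mathrm{eff},i}t}+\frac{zu_i}{2D_{\mathrm{eff},i}}-\frac{(u_i^2+4D_{\mathrm{eff},i}\kappa_i)t}{4D_{\mathrm{eff},i}}.
\end{equation*}
Third, complete the square with $\tilde u_i^2=u_i^2+4D_{\mathrm{eff},i}\kappa_i$, which gives
\begin{equation*}
-\frac{(z-\tilde u_it)^2}{4D_{\mathrm{eff},i}t}+\frac{z(u_i-\tilde u_i)}{2D_{\mathrm{eff},i}}.
\end{equation*}

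The first term yields exactly the \ac{IG} density with drift $\tilde u_i$, i.e., $\mathrm{IG}(z/\tilde u_i,\,2D_{\mathrm{eff},i}/\tilde u_i^2)$. The second term is the constant factor $S_i(z)$. This proves \eqref{eq:segment_response}.

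Because the \ac{IG} density with positive drift integrates to one, $\int_0^\infty j_i(z,t)\,\mathrm dt=S_i(z)$. This identifies $S_i(z)$ as the probability of reaching $z$ before degrading. Normalizing $j_i$ by $S_i(z)$ then gives the conditional \ac{FPT} law $\mathrm{IG}(\mu_i(z),\theta_i)$. Equivalently, $S_i(z)$ is the \ac{MGF} of the \ac{IG} evaluated at $-\kappa_i$, which can serve as a check.

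The main obstacle is justifying that the flux at $z$ equals the tilted \ac{FPT} density in the reactive setting. This needs the transparent-\ac{Rx} and negligible-diffusive-flux assumption (A4) as used in \cite{Jakumeit2026}, under which the advective flux coincides with the first-passage density. I would argue it through the pathwise survival factor $e^{-\kappa_i t}$, which commutes with the first-passage event. The remaining steps are the completing-the-square algebra above.
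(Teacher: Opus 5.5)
Your proof is correct, but it argues in the time domain, whereas the paper argues in the Laplace domain. The paper re-solves the boundary-value problem. After transforming, degradation merely replaces $s$ by $s+\kappa_i$. The ansatz $C_i=A_ie^{\lambda_iz}$, with the decaying root and the influx condition, gives $J_i(z,s)=e^{\lambda_i(s)z}$. Adding and subtracting $\tilde u_i$ in the exponent then factors this as $S_i(z)\,F_{\mathrm{IG}}(s;\mu_i(z),\theta_i)$, and $S_i(z)=J_i(z,0)$ is read off as the mass.

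Your tilt by $e^{-\kappa_it}$ is the time-domain dual of that shift, and your completing the square is the dual of adding and subtracting $\tilde u_i$. The paper mentions your viewpoint only in the remark after its proof. What each route buys:
\begin{itemize}
\item Your route is shorter and uses \cite[Theorem~1]{Jakumeit2026} as a black box. It also gives the probabilistic meaning of $S_i(z)$ directly, as survival of an independent exponential clock up to the first passage. The paper later uses the same tilt for lumped segments and in Corollary~\ref{cor:characteristics}(ii).
\item The paper's route produces the transfer function $J_i(z,s)=e^{\lambda_i(s)z}$ explicitly, which is what Theorem~\ref{thm:closed_loop} and the whole-body model work with. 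Its linearity in $z$ also makes the pipe-splitting property behind the distances in~\eqref{eq:path_flux} immediate. In your setting you would obtain it as $F_{\mathrm{IG}}(s+\kappa_i;z/u_i,2D_{\mathrm{eff},i}/u_i^2)$.
\end{itemize}

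Your closing worry about (A4) is unfounded; the lemma assumes only (A1) and (A2), and your own substitution argument already suffices. Under $c_i=e^{-\kappa_it}\tilde c_i$, the full flux $u_ic_i-D_{\mathrm{eff},i}\partial_zc_i$, diffusive part included, equals exactly $e^{-\kappa_it}$ times the degradation-free flux. The inlet condition is preserved, since $e^{\kappa_it}\delta(t)=\delta(t)$. That the degradation-free full flux is the \ac{IG} density is precisely the cited theorem. In the paper, (A4) enters only later, when fluxes are converted into received signals.
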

\begin{proof}
    The proof is provided in Appendix~\ref{app:proof_segment}.
\end{proof}

Exploiting (A2) and (A3), the \emph{path flux} of $P_k$ from $z_q$ in its first pipe $p_q$ to $z_w$ in its last pipe $p_w$ is~\cite[Eqs.~(9), (10)]{Jakumeit2026}
\begin{equation}
    \bar j_k(z_w,t;z_q)=j_q(l_q-z_q,t)*j_w(z_w,t)*\Conv_{i\in\mathcal E_k\setminus\{q,w\}}\!\!\!\!\!j_i(l_i,t),
    \label{eq:path_flux}
\end{equation}
where $*$ ($\Conv$) denotes convolution (over a set) in $t$, and $\bar j_k(z_w,t;z_q)=j_q(z_w-z_q,t)$ if $P_k$ consists of a single pipe, i.e., $q=w$, $z_w>z_q$.
Weighting the path fluxes~\eqref{eq:path_flux} of the paths in~\eqref{eq:path_sets} with their path fractions~\eqref{eq:path_fraction} as in~\cite[Eq.~(19)]{Jakumeit2026} yields the single-pass \acp{CIR} from the \ac{Tx} and from $n_{\mathrm{in}}$ to $z$ in $p_\nu$ and to $n_{\mathrm{out}}$, where paths to $n_{\mathrm{out}}$ end at $z_w=l_w$ in their last pipe $p_w$, as follows:
\begin{equation}
\begin{aligned}
    h_{\mathrm{Tx},\nu}(z,t)&=\textstyle\sum_{P_k\in\mathcal P_{\mathrm{Tx},\nu}}\gamma_{P_k}\,\bar j_k(z,t;z_{\mathrm{Tx}}),\\
    h_{\mathrm{Tx},\mathrm{out}}(t)&=\textstyle\sum_{P_k\in\mathcal P_{\mathrm{Tx},\mathrm{out}}}\gamma_{P_k}\,\bar j_k(l_w,t;z_{\mathrm{Tx}}),\\
    h_{\mathrm{in},\nu}(z,t)&=\textstyle\sum_{P_k\in\mathcal P_{\mathrm{in},\nu}}\gamma_{P_k}\,\bar j_k(z,t;0),\\
    h_{\mathrm{loop}}(t)&=\textstyle\sum_{P_k\in\mathcal P_{\mathrm{loop}}}\gamma_{P_k}\,\bar j_k(l_w,t;0),
\end{aligned}
    \label{eq:cir}
\end{equation}
which are zero for an empty path set. For $\kappa_i=0$, $h_{\mathrm{Tx},\nu}$ is the \ac{MIGHT} \ac{CIR}~\cite[Eq.~(19)]{Jakumeit2026}. Due to degradation and/or path bypasses, the \acp{CIR} in~\eqref{eq:cir} are generally defective densities, i.e., their integrals over $t$ are smaller than one.

\subsection{CIR for Closed-loop VNs}
\label{subsec:exact_response}

\begin{theorem}
\label{thm:closed_loop}
Let the segment fluxes $j_i\geq 0$ be possibly defective \ac{FPT} densities (see Lemma~\ref{lem:segment}) and assume that every circulation has strictly positive duration. Under (A2), (A3), the expected rate of an \ac{SM} released by the \ac{Tx} at $t=0$ crossing $z$ in $p_\nu$ is
\begin{equation}
\begin{aligned}
    h_{\mathrm{CL}}(z,t)&=\textstyle\sum_{L=0}^{\infty}h_{\mathrm{CL},L}(z,t),
    \qquad
    h_{\mathrm{CL},0}=h_{\mathrm{Tx},\nu},\\
    h_{\mathrm{CL},L}&=h_{\mathrm{Tx},\mathrm{out}}*h_{\mathrm{loop}}^{*(L-1)}*h_{\mathrm{in},\nu},
    \quad L\geq1,
\end{aligned}
    \label{eq:closed_loop_series}
\end{equation}
where $h_{\mathrm{CL},L}$ collects the crossings in pass $L$, $h^{*n}$ is the $n$-fold self-convolution, and $h^{*0}=\delta$ with Dirac delta $\delta$.
In the Laplace domain, \eqref{eq:closed_loop_series} is expressed as a feedback system (Fig.~\ref{fig:system_model}a)
\begin{equation}
    H_{\mathrm{CL}}(z,s)
    =H_{\mathrm{Tx},\nu}(z,s)
    +\frac{H_{\mathrm{Tx},\mathrm{out}}(s)\,H_{\mathrm{in},\nu}(z,s)}{1-H_{\mathrm{loop}}(s)}.
    \label{eq:closed_loop_laplace}
\end{equation}
Upper-case letters denote the Laplace transforms, e.g., $H_{\mathrm{CL}}(z,s)=\int_0^\infty h_{\mathrm{CL}}(z,t)\,e^{-st}\,\mathrm dt$, and $\operatorname{Re}\{s\}>0$. 
\end{theorem}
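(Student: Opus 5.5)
We prove the series representation~\eqref{eq:closed_loop_series} first, by decomposing the trajectory of a single \ac{SM} into passes, and then pass to the Laplace domain. Every crossing of $z$ in $p_\nu$ happens in exactly one pass $L$, as defined in Section~\ref{subsec:network}. By linearity of expectation, the expected crossing rate is therefore $h_{\mathrm{CL}}=\sum_L h_{\mathrm{CL},L}$, where $h_{\mathrm{CL},L}(z,t)$ is the expected rate of crossings of $z$ that occur during pass $L$. All summands are nonnegative, so Tonelli's theorem allows the sum to be exchanged with expectations and integrals without further justification.

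\textbf{Identifying each pass term.} For $L=0$, the \ac{SM} starts at $z_{\mathrm{Tx}}$ in $p_q$. By (A3), the probability of following path $P_k$ is $\gamma_{P_k}$ from~\eqref{eq:path_fraction}. Conditioned on that path, the segment transit times are independent and possibly defective by Lemma~\ref{lem:segment}, so the flux at $z$ is the convolution $\bar j_k$ in~\eqref{eq:path_flux}. Within one pass, a path visits $p_\nu$ at most once, because the open-loop \ac{VN} is acyclic. Under (A4), crossings are treated as first passages, since diffusive back-crossings are neglected. Summing over $\mathcal P_{\mathrm{Tx},\nu}$ gives $h_{\mathrm{Tx},\nu}$.

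For $L\geq1$, we condition on the times $\tau_1<\tau_2<\dots<\tau_L$ of successive passages through $n_0$. The first passage time $\tau_1$ has (defective) density $h_{\mathrm{Tx},\mathrm{out}}$. Each increment $\tau_{m+1}-\tau_m$ is one circulation with density $h_{\mathrm{loop}}$. By the history-independence in (A3), these increments are independent of each other and of pass~0. After $\tau_L$, the crossing rate at $z$ is $h_{\mathrm{in},\nu}(z,t-\tau_L)$. Integrating over the joint density of the stage durations, with the degradation defects carried along as total masses, yields the $L+1$-fold convolution $h_{\mathrm{Tx},\mathrm{out}}*h_{\mathrm{loop}}^{*(L-1)}*h_{\mathrm{in},\nu}$.

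\textbf{Laplace domain.} The convolution theorem turns each term into $H_{\mathrm{Tx},\mathrm{out}}(s)\,H_{\mathrm{loop}}(s)^{L-1}H_{\mathrm{in},\nu}(z,s)$. Summing over $L$ then gives~\eqref{eq:closed_loop_laplace}, provided the geometric series converges, i.e., $|H_{\mathrm{loop}}(s)|<1$ for $\operatorname{Re}\{s\}>0$. We expect this convergence step to be the main obstacle, since it is where the positive-duration assumption enters.

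To establish it, we use $|H_{\mathrm{loop}}(s)|\le\int_0^\infty h_{\mathrm{loop}}(t)e^{-\operatorname{Re}\{s\}t}\,\mathrm dt$ together with $\int h_{\mathrm{loop}}\le1$, which holds because the $\gamma$'s sum to one over $\mathcal P_{\mathrm{loop}}$ and the $j_i$ are defective densities. Because every circulation has strictly positive duration, $h_{\mathrm{loop}}$ places no mass at $t=0$. Hence $e^{-\operatorname{Re}\{s\}t}<1$ holds $h_{\mathrm{loop}}$-almost everywhere, and the inequality is strict whenever $h_{\mathrm{loop}}\not\equiv0$; if $h_{\mathrm{loop}}\equiv0$, the bound is trivial. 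This gives $|H_{\mathrm{loop}}(s)|<1$ and hence absolute convergence of the geometric series.

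Finally, we need the exchange of summation and Laplace transform. It follows from Tonelli's theorem applied at real $s=\operatorname{Re}\{s\}$, which bounds the series absolutely, and then from dominated convergence for complex $s$.
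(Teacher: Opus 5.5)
Your proposal is correct and follows essentially the same route as the paper's proof: the expected crossing rate is split into per-pass defective densities (at most one crossing per pass by acyclicity), pass~$0$ is handled by a path sum plus convolution, and passes $L\geq1$ by independent stage times. Your cumulative passage times through $n_0$ are simply partial sums of the paper's stage durations $\tau_0,\tau_\ell,\tau_L$, and you likewise use the bound $|H_{\mathrm{loop}}(s)|\leq H_{\mathrm{loop}}(\operatorname{Re}\{s\})<1$ from the positive-duration assumption to sum the geometric series termwise. Two cosmetic points: you need not invoke (A4), since the first-passage property is already part of the hypothesis that the $j_i$ are FPT densities, and the factorization of survival across stages should be credited explicitly to (A2) rather than described as ``defects carried along as total masses.''
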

\begin{proof}
    The proof is provided in Appendix~\ref{app:proof_closed_loop}.
\end{proof}

\begin{corollary}[]
\label{cor:characteristics}
Exploiting assumptions (A1)-(A3), the following properties hold:

    \noindent(i) Let $S_{\mathrm{Tx},\nu}(z)$, $S_{\mathrm{Tx},\mathrm{out}}$, $S_{\mathrm{in},\nu}(z)$, and $S_{\mathrm{loop}}$ denote the masses $\int_0^\infty h(t)\,\mathrm dt$ of the respective \acp{CIR} in~\eqref{eq:cir}, i.e., the probabilities that the \ac{SM} arrives within the pass before degradation; e.g., $S_{\mathrm{loop}}$ is the survival probability per circulation. For $S_{\mathrm{loop}}<1$, the expected number of crossings of $z_{\mathrm{Rx}}$ per released \ac{SM} is
    \begin{align}
        \bar n_{\mathrm{Rx}}\hspace*{-.5mm}=\hspace*{-1.5mm}\int_0^\infty\hspace*{-4mm} h_{\mathrm{CL}}(z_{\mathrm{Rx}},t)\mathrm dt
        \hspace*{-.5mm}=\hspace*{-.5mm}S_{\mathrm{Tx},\nu}(z_{\mathrm{Rx}})\hspace*{-.5mm}+\hspace*{-.5mm}\tfrac{S_{\mathrm{Tx},\mathrm{out}}\,S_{\mathrm{in},\nu}(z_{\mathrm{Rx}})}{1-S_{\mathrm{loop}}}.
        \label{eq:expected_crossings}
    \end{align}

    \noindent (ii) For homogeneous degradation in a closed-loop \ac{VN}, i.e., $\kappa_i=\kappa, \forall p_i$, $h_{\mathrm{CL}}(z,t)$ is $e^{-\kappa t}$ times the \ac{CIR} for $\kappa=0$.

    \noindent (iii) For $\kappa_i=0,\forall p_i$, $h_{\mathrm{CL}}(z,t)\to Q_\nu/V$ for $t\to\infty$.
\end{corollary}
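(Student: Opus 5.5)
All three properties follow from Theorem~\ref{thm:closed_loop} by evaluating the Laplace-domain feedback form \eqref{eq:closed_loop_laplace} at particular values of $s$ or by shifting $s$.

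\emph{Part (i).} Every CIR in \eqref{eq:cir} is nonnegative and integrable, so its mass equals $\lim_{s\to0^+}$ of its Laplace transform by monotone convergence. For example, $H_{\mathrm{loop}}(s)\uparrow S_{\mathrm{loop}}$.
\begin{itemize}
\item Apply monotone convergence termwise to the series \eqref{eq:closed_loop_series}, which has nonnegative terms.
\item The mass of a convolution is the product of the masses, so $\int_0^\infty h_{\mathrm{CL},L}\,\mathrm dt=S_{\mathrm{Tx},\mathrm{out}}S_{\mathrm{loop}}^{L-1}S_{\mathrm{in},\nu}$.
\item Summing the geometric series for $S_{\mathrm{loop}}<1$ gives \eqref{eq:expected_crossings}.
\end{itemize}
Equivalently, one can let $s\to0^+$ in \eqref{eq:closed_loop_laplace}, since the denominator stays bounded away from zero.

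\emph{Part (ii).} With $\kappa_i=\kappa$ for all $p_i$, compare the exponent in \eqref{eq:segment_response} with its $\kappa=0$ counterpart. The tilting identity I would verify is
\begin{equation}
S_i(z)\,f_{\mathrm{IG}}(t;z/\tilde u_i,2D_{\mathrm{eff},i}/\tilde u_i^2)=e^{-\kappa t}f_{\mathrm{IG}}(t;z/u_i,2D_{\mathrm{eff},i}/u_i^2).
\end{equation}
It holds because the exponent $-(z-u t)^2/(4D_{\mathrm{eff}}t)$ of the undegraded Aris--Taylor flux gains exactly $-\kappa t$. Completing the square with $\tilde u^2=u^2+4D_{\mathrm{eff}}\kappa$ produces the prefactor $\exp(z(u-\tilde u)/(2D_{\mathrm{eff}}))$. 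This is just the Laplace shift $J_i^{(\kappa)}(z,s)=J_i^{(0)}(z,s+\kappa)$, which can also be read from the proof of Lemma~\ref{lem:segment}.

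Two further facts are needed:
\begin{itemize}
\item $D_{\mathrm{eff},i}$ and $u_i$ do not depend on $\kappa$.
\item By (A3), the path fractions \eqref{eq:path_fraction} do not depend on $\kappa$.
\end{itemize}
Since a convolution of functions each multiplied by $e^{-\kappa t}$ is $e^{-\kappa t}$ times the convolution, every CIR in \eqref{eq:cir} equals $e^{-\kappa t}$ times its $\kappa=0$ version. The same then holds for each $h_{\mathrm{CL},L}$ and for the sum $h_{\mathrm{CL}}$. In Laplace terms, $H_{\mathrm{CL}}^{(\kappa)}(z,s)=H_{\mathrm{CL}}^{(0)}(z,s+\kappa)$.

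\emph{Part (iii).} This is the main obstacle, a renewal-theorem argument. For $\kappa=0$ there is no degradation and $S_{\mathrm{loop}}=1$, so $h_{\mathrm{loop}}$ is a proper density. It is non-lattice, being a mixture of convolutions of IG densities, and it has mean
\begin{equation}
\bar\tau=\textstyle\sum_{P_k\in\mathcal P_{\mathrm{loop}}}\gamma_{P_k}\sum_{i\in\mathcal E_k}l_i/u_i.
\end{equation}
By the key renewal theorem, the renewal density $\sum_{n\ge0}h_{\mathrm{loop}}^{*n}$ tends to $1/\bar\tau$. I would invoke the density form, whose hypothesis can be met by noting that $h_{\mathrm{loop}}^{*2}$ is bounded and directly Riemann integrable.

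The first-pass term $h_{\mathrm{Tx},\nu}$ decays to zero. The renewal density is convolved with $h_{\mathrm{Tx},\mathrm{out}}$, of mass $1$, and with $h_{\mathrm{in},\nu}$, of mass $Q_\nu/Q$ by flow conservation. Hence
\begin{equation}
h_{\mathrm{CL}}\to \frac{Q_\nu/Q}{\bar\tau}.
\end{equation}

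The remaining step is to show $\bar\tau=V/Q$. Exchanging the sums and using $\gamma$-weighted flow conservation, the total weight of paths through $p_i$ is $Q_i/Q$. Therefore
\begin{equation}
\bar\tau=\textstyle\sum_i (Q_i/Q)(l_i/u_i)=\sum_i V_i/Q=V/Q,
\end{equation}
because $l_iQ_i/u_i=\pi r_i^2 l_i=V_i$. This gives the limit $Q_\nu/V$.

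Alternatively, for a Laplace-side check: near $s=0$ we have $1-H_{\mathrm{loop}}(s)\approx\bar\tau s$, so $sH_{\mathrm{CL}}\to (Q_\nu/Q)/\bar\tau$. This gives the same value via the final-value theorem, once the existence of the limit is granted by the renewal argument.
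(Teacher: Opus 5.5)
Your proposal is correct and follows the paper's proof in all three parts. Part (i) uses mass factorization of convolutions and the geometric series, Part (ii) uses the $e^{-\kappa t}$ factor distributing over convolutions with $\kappa$-independent path fractions, and Part (iii) uses a renewal argument with mean circulation time $\sum_i (Q_i/Q)(V_i/Q_i)=V/Q$. The paper's version of (iii) differs only in applying the key renewal theorem directly to $\psi=\varphi+h_{\mathrm{loop}}*\psi$ with the directly Riemann integrable $\varphi=h_{\mathrm{Tx},\mathrm{out}}*h_{\mathrm{in},\nu}$. That spares you the extra step your route needs: arguing that the renewal density $\sum_{n\ge1}h_{\mathrm{loop}}^{*n}$ is bounded, so dominated convergence lets its limit $1/\bar\tau$ pass through the convolution with $\varphi$.
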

\begin{proof}
    The proof is provided in Appendix~\ref{app:proof_characteristics}.
\end{proof}

\subsection{Closed-Form IG-Mixture Approximation}
\label{subsec:closed_form}

In the following, we obtain a closed-form approximation of~\eqref{eq:closed_loop_laplace} through a moment matching approach, as in~\cite{Jakumeit2026}.  

\begin{proposition}[]
\label{lem:series}
Consider a chain of segment traversals $i\in\mathcal C$, e.g., along a path or along several passes, where a segment may recur, with independent transit times and fluxes $S_if_{\mathrm{IG}}(t;\mu_i,\theta_i)$.
The path flux is then the convolution of these fluxes (see \eqref{eq:path_flux}), i.e., $S_{\mathcal C}=\prod_iS_i$ times the density of the total transit time $T_{\mathcal C}$, which has the mean $\hat\mu_{\mathcal C}=\sum_i\mu_i$ and the variance $\hat\mu_{\mathcal C}\hat\theta_{\mathcal C}$ with $\hat\theta_{\mathcal C}=\sum_i\mu_i\theta_i/\hat\mu_{\mathcal C}$, but is generally not \ac{IG} distributed. Approximating $T_{\mathcal C}$ as in~\cite{Jakumeit2026}, gives
\begin{equation}
    \Conv_{i\in\mathcal C}S_if_{\mathrm{IG}}(t;\mu_i,\theta_i)
    \approx S_{\mathcal C}\,f_{\mathrm{IG}}\bigl(t;\hat\mu_{\mathcal C},\hat\theta_{\mathcal C}\bigr),
    \label{eq:series_rule}
\end{equation}
which is exact iff all $\theta_i$ are equal, and otherwise only underestimates the conditional skewness, cf. \cite[Theorem~3]{Jakumeit2026}.
\end{proposition}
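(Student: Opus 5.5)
The convolution of the defective fluxes factorizes as $\bigl(\prod_i S_i\bigr)$ times the convolution of the normalized densities $f_{\mathrm{IG}}(t;\mu_i,\theta_i)$. This is immediate from bilinearity of convolution, or from the Laplace domain, where the transform of the chain is $\prod_i S_i F_i(s)$. By (A3) the transit times $T_i$ are independent, so the normalized convolution is the density of $T_{\mathcal C}=\sum_i T_i$, and $S_{\mathcal C}=\prod_i S_i$ is the survival probability of the whole chain. This already covers recurring segments: a segment traversed twice simply contributes two independent factors.

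For the moments, I will use that an $\mathrm{IG}(\mu,\theta)$ variable in the parameterization of Lemma~\ref{lem:segment} has mean $\mu$ and variance $\mu^2\cdot\mu/\lambda$ with shape $\lambda=\mu^2/\theta$. Hence its variance is $\mu\theta$, which can be read off the cumulant generating function $\log F(s)=\frac{\mu}{\theta}\bigl(1-\sqrt{1+2\theta s}\bigr)$, a form consistent with the density given. Cumulants add under independent sums. The mean of $T_{\mathcal C}$ is therefore $\sum_i\mu_i=\hat\mu_{\mathcal C}$, and its variance is $\sum_i\mu_i\theta_i=\hat\mu_{\mathcal C}\hat\theta_{\mathcal C}$ with $\hat\theta_{\mathcal C}$ as stated. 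The approximation \eqref{eq:series_rule} is then defined by moment matching: choose the IG whose mean $\hat\mu_{\mathcal C}$ and variance $\hat\mu_{\mathcal C}\hat\theta_{\mathcal C}$ equal those of $T_{\mathcal C}$, and multiply by $S_{\mathcal C}$.

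For exactness when all $\theta_i=\theta$, the CGF gives
\begin{equation*}
\sum_i\log F_i(s)=\frac{\sum_i\mu_i}{\theta}\bigl(1-\sqrt{1+2\theta s}\bigr),
\end{equation*}
which is exactly the CGF of $\mathrm{IG}(\hat\mu_{\mathcal C},\theta)$, and in this case $\hat\theta_{\mathcal C}=\theta$. For the converse, the third cumulant of $\mathrm{IG}(\mu,\theta)$ is $3\mu\theta^2$. The exact third cumulant is $\kappa_3=3\sum_i\mu_i\theta_i^2$, while the approximation has $3\hat\mu_{\mathcal C}\hat\theta_{\mathcal C}^2=3\bigl(\sum_i\mu_i\theta_i\bigr)^2/\sum_i\mu_i$. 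By Cauchy--Schwarz (Jensen with weights $\mu_i/\hat\mu_{\mathcal C}$), $\bigl(\sum_i\mu_i\theta_i\bigr)^2\le\sum_i\mu_i\sum_i\mu_i\theta_i^2$, with equality iff all $\theta_i$ coincide (assuming all $\mu_i>0$). So whenever the $\theta_i$ differ, the third cumulants differ, the approximation is not exact, and it strictly underestimates the third cumulant. Since the variances match, it also underestimates the skewness of the conditional distribution.

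The main subtlety is the phrase ``only underestimates'': the first two moments match by construction and the third is strictly smaller, but the claim is about skewness specifically. I will state this as the content of the cited \cite[Theorem~3]{Jakumeit2026} and not claim anything about higher cumulants. A second point to check is that chains with recurring segments, over multiple passes, still satisfy the independence premise; this follows from (A3), which assumes independence across repeated traversals.
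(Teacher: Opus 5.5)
Your proposal is correct and follows the paper's proof essentially step by step: you factor out $S_{\mathcal C}=\prod_i S_i$, read the cumulants $\mu$, $\mu\theta$, $3\mu\theta^2$ off $\ln F_{\mathrm{IG}}$ and add them, apply Jensen (equivalently, your Cauchy--Schwarz) with weights $w_i=\mu_i/\hat\mu_{\mathcal C}$ to get $3\sum_i\mu_i\theta_i^2\geq 3\hat\mu_{\mathcal C}\hat\theta_{\mathcal C}^2$ with equality iff all $\theta_i$ coincide, and use the product of transforms for the equal-$\theta$ case. Beyond this, the paper only makes the gap explicit as the skewness error $3\operatorname{Var}_w(\theta_i)\hat\theta_{\mathcal C}^{-2}\sqrt{2/\mathrm{Pe}_{\mathcal C}}$, and it drops delay-free traversals from $\mathcal C$, which corresponds to your assumption $\mu_i>0$.
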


\begin{proof}
    The proof is provided in Appendix~\ref{app:proof_series}.
\end{proof}

Applying proposition~\ref{lem:series} to~\eqref{eq:path_flux} gives the approximate path flux $\hat j_k=S_kf_{\mathrm{IG}}(t;\hat\mu_k,\hat\theta_k)$.
A \emph{trajectory} $\omega=(P_{k_0},\ldots,P_{k_L})\in\Omega_L$ is the sequence of paths taken in passes $0,\ldots,L$ up to a crossing in pass $L$, i.e., $P_{k_0}\in\mathcal P_{\mathrm{Tx},\nu}$ for $L=0$, and $P_{k_0}\in\mathcal P_{\mathrm{Tx},\mathrm{out}}$, $P_{k_\ell}\in\mathcal P_{\mathrm{loop}}$ for $1\leq\ell<L$, and $P_{k_L}\in\mathcal P_{\mathrm{in},\nu}$ for $L\geq1$.
Inserting~\eqref{eq:cir} into~\eqref{eq:closed_loop_series} gives $h_{\mathrm{CL},L}$ as a sum over $\Omega_L$ of $\prod_\ell\gamma_{P_{k_\ell}}$ multiplied with the convolution of the path fluxes along $\omega$, i.e., the flux of one long chain of segment traversals.

\begin{figure}[t]
    \centering
    \includegraphics[width=\linewidth]{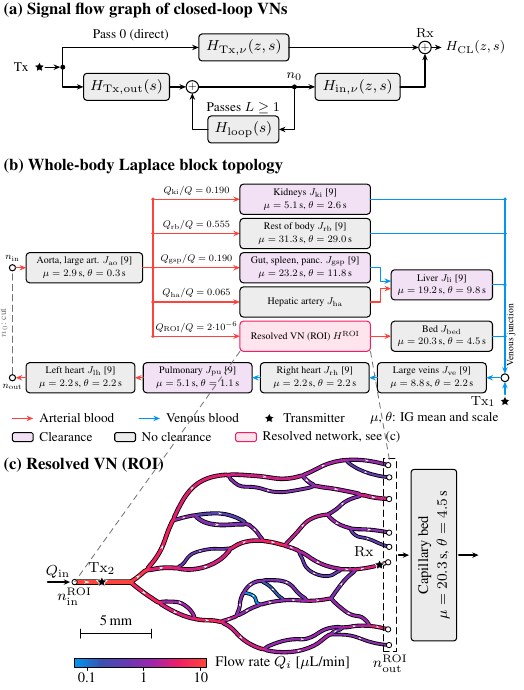}
    \caption{\textbf{Multiscale \ac{MIGHT} system model.} 
    (a) The four transfer functions of~\eqref{eq:closed_loop_laplace} for a given \ac{Tx}/\ac{Rx} and inlet/outlet placement. (b) Lumped organ segments with the resolved \ac{VN} embedded. (c) The \ac{ROI} in detail, draining into a lumped capillary bed.}
    \label{fig:system_model}
\end{figure}

\begin{proposition}[]
\label{prop:mixture}
Applying proposition~\ref{lem:series} to the trajectories in~\eqref{eq:closed_loop_series} yields the time domain closed-form \ac{CIR} of a closed-loop \ac{VN}
\begin{equation}
    h_{\mathrm{CL}}(z,t)\hspace*{-.7mm}\approx\hspace*{-.7mm}\hat h_{\mathrm{CL}}(z,t)
    \hspace*{-.5mm}=\hspace*{-1mm}\textstyle\sum_{L=0}^{\infty}\hspace*{-.5mm}\sum_{\omega\in\Omega_L}\hspace*{-1.5mm}W_\omega f_{\mathrm{IG}}\bigl(t;\hspace*{-.5mm}\hat\mu_\omega,\hspace*{-.5mm}\hat\theta_\omega\bigr),
    \label{eq:closed_loop_mixture}
\end{equation}
with $W_\omega\hspace*{-1mm}=\hspace*{-1mm}\prod_{\ell=0}^{L}\hspace*{-.5mm}\gamma_{P_{k_\ell}}\hspace*{-.5mm}S_{k_\ell}$, $\hat\mu_\omega\hspace*{-1mm}=\hspace*{-1mm}\sum_{\ell=0}^{L}\hspace*{-.5mm}\hat\mu_{k_\ell}$, $\hat\theta_\omega\hspace*{-1mm}=\hspace*{-1mm}\sum_{\ell=0}^{L}\hspace*{-.5mm}\hat\mu_{k_\ell}\hat\theta_{k_\ell}\hspace*{-.5mm}/\hspace*{-.5mm}\hat\mu_\omega$.
\end{proposition}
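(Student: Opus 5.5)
The plan is to expand the exact series of Theorem~\ref{thm:closed_loop} into a sum over trajectories, and then apply Proposition~\ref{lem:series} term by term.

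\emph{Expansion into trajectories.} Fix $L\geq1$ and substitute the path sums~\eqref{eq:cir} into $h_{\mathrm{CL},L}=h_{\mathrm{Tx},\mathrm{out}}*h_{\mathrm{loop}}^{*(L-1)}*h_{\mathrm{in},\nu}$. Convolution is bilinear, so the $(L+1)$-fold convolution of finite sums distributes into a finite sum. The index set of this sum is exactly the Cartesian product $\mathcal P_{\mathrm{Tx},\mathrm{out}}\times\mathcal P_{\mathrm{loop}}^{L-1}\times\mathcal P_{\mathrm{in},\nu}=\Omega_L$, and each summand is $\prod_{\ell}\gamma_{P_{k_\ell}}$ times $\bar j_{k_0}*\cdots*\bar j_{k_L}$. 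For $L=0$ the statement reduces directly to the first line of~\eqref{eq:cir}. The remark after Proposition~\ref{lem:series} states this expansion, so this step is bookkeeping.

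\emph{Flattening into one chain.} By~\eqref{eq:path_flux}, each $\bar j_{k_\ell}$ is itself a convolution of pipe fluxes $j_i=S_i f_{\mathrm{IG}}(\cdot;\mu_i,\theta_i)$ from Lemma~\ref{lem:segment}. The one exception is that the first and last pipes of a path are evaluated at the partial lengths $l_q-z_{\mathrm{Tx}}$ and $z$, but these are still of the same scaled-IG form. By associativity, the trajectory flux is therefore the convolution over a single chain $\mathcal C_\omega$ of segment traversals, formed by concatenating the chains $\mathcal C_{k_\ell}$ of the individual paths. Segments may recur across passes, which is allowed in Proposition~\ref{lem:series}. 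Independence of the transit times comes from (A3).

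\emph{Applying Proposition~\ref{lem:series}.} Applying Proposition~\ref{lem:series} to the chain $\mathcal C_\omega$ gives the trajectory flux $\approx S_{\mathcal C_\omega}f_{\mathrm{IG}}(t;\hat\mu_{\mathcal C_\omega},\hat\theta_{\mathcal C_\omega})$. It remains to show that these parameters equal the per-path aggregates stated in the proposition:
\begin{itemize}
\item $S_{\mathcal C_\omega}=\prod_\ell\prod_{i\in\mathcal C_{k_\ell}}S_i=\prod_\ell S_{k_\ell}$.
\item $\hat\mu_{\mathcal C_\omega}=\sum_\ell\sum_{i\in\mathcal C_{k_\ell}}\mu_i=\sum_\ell\hat\mu_{k_\ell}=\hat\mu_\omega$.
\item $\hat\mu_\omega\hat\theta_{\mathcal C_\omega}=\sum_\ell\sum_{i\in\mathcal C_{k_\ell}}\mu_i\theta_i=\sum_\ell\hat\mu_{k_\ell}\hat\theta_{k_\ell}$, using $\hat\mu_k\hat\theta_k=\sum_{i\in\mathcal C_k}\mu_i\theta_i$ by the definition of $\hat\theta_k$.
\end{itemize}
Hence $\hat\theta_{\mathcal C_\omega}=\hat\theta_\omega$. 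Equivalently, one can apply Proposition~\ref{lem:series} first within each path, giving $\hat j_k$, and then again across paths. Because mean and variance are additive, both routes give the same IG; this is worth noting since the proposition phrases the parameters via the $\hat\mu_{k_\ell}$ and $\hat\theta_{k_\ell}$. Multiplying by $\prod_\ell\gamma_{P_{k_\ell}}$ gives the weight $W_\omega$.

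\emph{Main obstacle: the infinite sum over $L$.} The sum over $\Omega_L$ is finite, but the sum over $L$ is infinite, and the approximation should not break the series. I would argue convergence by bounding the total mass. Since $\sum_{\omega\in\Omega_L}W_\omega=S_{\mathrm{Tx},\mathrm{out}}S_{\mathrm{loop}}^{L-1}S_{\mathrm{in},\nu}$, the masses sum to~\eqref{eq:expected_crossings} when $S_{\mathrm{loop}}<1$. In the degradation-free case, pointwise convergence in $t$ needs the strictly positive circulation duration already assumed in Theorem~\ref{thm:closed_loop}: the means $\hat\mu_\omega$ grow linearly in $L$, so for fixed $t$ only a vanishing mass of IG terms contributes. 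Because the approximation preserves each term's mass, mean, and variance exactly, the approximate series inherits the same mass and moment bookkeeping as the exact one. This justifies writing $\hat h_{\mathrm{CL}}$ as the stated mixture.
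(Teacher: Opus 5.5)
Your proposal is correct and follows essentially the same route as the paper's proof. You expand \eqref{eq:closed_loop_series} over the path-set product $\Omega_L$, flatten each trajectory into one chain $\mathcal C_\omega$ of segment traversals, and apply Proposition~\ref{lem:series}; since survival probabilities multiply while conditional means and variances add, matching per trajectory coincides with first matching per path.

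Your closing paragraph on the infinite sum over $L$ goes beyond the paper, which only remarks that mass and conditional mean and variance are preserved per pass. If you keep it, state the bound precisely. $\hat\theta_\omega$ is a weighted average of finitely many path scales and is therefore bounded, while $\hat\mu_\omega$ grows linearly in $L$. Hence $f_{\mathrm{IG}}(t;\hat\mu_\omega,\hat\theta_\omega)$ at fixed $t$ decays super-exponentially in $L$, uniformly over $\Omega_L$. Together with $\sum_{\omega\in\Omega_L}W_\omega\leq1$, this gives summability of the series, not merely that its terms vanish.
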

\begin{proof}
    The proof is provided in Appendix~\ref{app:proof_mixture}.
\end{proof}

Each trajectory in \eqref{eq:closed_loop_series} thus acts as one long path in the closed-loop \ac{VN}, and for $\kappa_i=0,\forall p_i$, the $L=0$ terms recover the closed-form \ac{MIGHT} \ac{CIR} in \cite{Jakumeit2026}.

\subsection{Received Signal}
\label{subsec:received_signal}

The expected number of \acp{SM} inside an \ac{Rx} at $z_{\mathrm{Rx}}$ in $p_\nu$ for a bolus release of $N$ \acp{SM} at the \ac{Tx} follows from \eqref{eq:closed_loop_mixture} as
\begin{equation}
    N_{\mathrm{obs,CL}}(t)\approx\tfrac{N\,l_{\mathrm{Rx}}}{u_\nu}\,h_{\mathrm{CL}}(z_{\mathrm{Rx}},t)\approx\tfrac{N\,l_{\mathrm{Rx}}}{u_\nu}\,\hat h_{\mathrm{CL}}(z_{\mathrm{Rx}},t).
    \label{eq:closed_loop_receiver_signal}
\end{equation}
The pipe and path fluxes, as well as the crossing-rate \ac{CIR} $h_{\mathrm{CL}}$, have units of $\mathrm{s}^{-1}$ per released \ac{SM}. Using (A4), the corresponding \ac{1-D} concentration response is approximately $h_{\mathrm{CL}}/u_\nu$, with units of $\mathrm{m}^{-1}$ per released \ac{SM} \cite{Jakumeit2026}. Alternatively, the received signal can be expressed as a concentration, in line with the clinical literature. Dividing~\eqref{eq:closed_loop_receiver_signal} by the observed volume $\pi r_\nu^2 l_\mathrm{Rx}$ cancels $l_\mathrm{Rx}$, so that for a bolus of solute mass $m_0=c_\mathrm{inj}V_\mathrm{inj}$, i.e., a volume $V_\mathrm{inj}$ of a solution of concentration $c_\mathrm{inj}$, released at the \ac{Tx}, the concentration at any position $z$ in the \ac{Rx} pipe $p_\nu$ follows as
\begin{equation}
        c_\mathrm{CL}(z,t)=\tfrac{m_0}{Q_\nu}\,h_\mathrm{CL}(z,t),\qquad Q_\nu=\pi r_\nu^2 u_\nu,
        \label{eq:closed_loop_concentration}
\end{equation}
in mass per volume, e.g., \si{\milli\gram\per\milli\liter} for $m_0$ in \si{\milli\gram} and $Q_\nu$ in \si{\milli\liter\per\second}. 
From Corollary~\ref{cor:characteristics}--(iii), without degradation, the received signal approaches the following uniform-mixing level
\begin{align}
    &N_{\mathrm{obs,CL}}(t)\xrightarrow[t\to\infty]{}\left(N\,l_{\mathrm{Rx}}\,\pi r_\nu^2\right)/V,
    &\!\!\!\!c_{\mathrm{CL}}(z,t)\xrightarrow[t\to\infty]{}m_0/V.
    \label{eq:closed_loop_equilibrium}
\end{align}
Replacing $m_0$ by the amount of substance $m_0/M$ or the \ac{SM} number $m_0 N_\mathrm{A}/M$, with molar mass $M$ and the Avogadro constant $N_\mathrm{A}$, gives the molar or number concentration instead.

\section{Numerical Validation}
\label{sec:NumericalValidation}
\begin{figure*}[t]
    \centering
    \includegraphics[width=\linewidth]{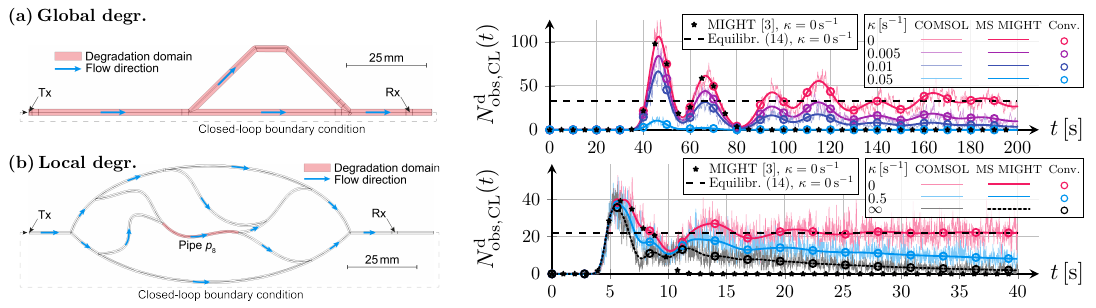}
    \caption{\textbf{Validation of Multiscale MIGHT.} Left: \acp{VN} and degradation domains. Right: $N_{\mathrm{obs,CL}}(t)$ in~(14) (thick), COMSOL (thin), convolution model (circles, Theorem~1); $D=\SI{1.461e-7}{\meter\squared\per\second}$; \textbf{a)} $Q=\SI{1.021e-8}{\meter\cubed\per\second}$, $N=3000$; \textbf{b)} $Q=\SI{5e-8}{\meter\cubed\per\second}$, $N=5000$.}
    \label{fig:validation}
\end{figure*}

We validate~\eqref{eq:closed_loop_receiver_signal} against particle-based finite-element simulations in COMSOL for the \acp{VN} in Fig.~\ref{fig:validation}, with the degradation rate $\kappa_i=\kappa$ in all pipes (global) or in pipe $p_8$ only (local).
The \ac{3-D} simulations couple the \textit{Laminar Flow} module (fully developed inflow, no slip, zero outlet pressure) with the \textit{Particle Tracing for Fluid Flow} module (Brownian motion, drag force).
A periodic boundary between outlet and inlet, continuous in flow rate and particle trajectories, closes the loop, with the \ac{Tx} (uniform random release over the cross-section) in the inlet and the \ac{Rx} in the outlet pipe.
First-order degradation with cumulative exposure over circulations is realized by a per-particle exponential clock that runs at rate $\kappa$ only inside pipes with degradation.
COMSOL signals are noisy due to Brownian motion and limited particle numbers.

Across all \acp{VN} and degradation rates in Fig.~\ref{fig:validation}, the proposed closed-loop model (thick solid, \eqref{eq:closed_loop_receiver_signal}) is in excellent agreement with the COMSOL results (thin solid) and the convolution-based response (circles, based on \eqref{eq:closed_loop_series}).
In Fig.~\ref{fig:validation}a), without degradation, the initial pass (star markers, \cite[Eq.~(24)]{Jakumeit2026}) shows one peak for each of the two \ac{Tx}--\ac{Rx} paths, and later passes superimpose combinations of their transit times.
Successive circulations remain distinguishable for long, and the signal approaches the equilibrium~\eqref{eq:closed_loop_equilibrium} only slowly.
With increasing global $\kappa$, the peaks decrease and the signal decays to zero for $t\rightarrow\infty$; for $\kappa=\SI{0.05}{\per\second}$, only the initial pass remains visible.

In Fig.~\ref{fig:validation}b), degradation is confined to $p_8$, whose four paths carry $\approx\SI{34}{\percent}$ of the particles per pass.
As these paths are among the slower ones, the initial peak amplitude is only moderately reduced and shifts slightly towards earlier times.
For $\kappa=\SI{0.5}{\per\second}$, $\SI{52}{\percent}$ of the particles entering $p_8$ survive it, i.e., $S_\mathrm{loop}=1-0.34\cdot(1-0.52)\approx0.84$, and the signal decays.
Local degradation acts only on particles entering its domain: for $\kappa\rightarrow\infty$, $p_8$ is a perfect sink, $S_\mathrm{loop}$ attains its lower bound $\num{0.66}$, and the resulting signal (dotted black) bounds the signal for any rate in $p_8$ from below; even then, a particle crosses the \ac{Rx} on average $(1-0.34)/0.34\approx1.9$ times, cf.~\eqref{eq:expected_crossings}.

\section{Multiscale Whole-Body Model}
\label{sec:Multiscale}
In the following, we extend the models for closed-loop \acp{VN} from Section~\ref{sec:ClosedLoopMIGHT} to a whole-body model comprising an exemplary resolved \ac{VN} embedded in the systemic circulation, modeled by lumped segments representing (groups of) organs.

\subsection{Lumped Segments}
\label{subsec:lumped}

The $i$-th \emph{lumped segment} has a flow rate $Q_i>0$ and a volume $V_i>0$ but no internal positions, such that we write its flux as $j_i(t)$ and its transfer function as $J_i(s)$. It can represent, e.g., a heart chamber, the pulmonary circulation, an organ, or a capillary bed, cf.\ Fig.\ \ref{fig:system_model}b).
Studies of systemic transit times in mammals~\cite{Weiss2009} have shown that the transit time through the $i$-th lumped segment can be modeled as \ac{IG}-distributed.
Without degradation, we use the mean $\mu_i=V_i/Q_i$ (central volume principle~\cite{Meier1954}) and scale $\theta_i=\mathrm{CV}_i^2\mu_i$, with coefficient of variation $\mathrm{CV}_i$.
Values for $\mu_i$ and $\mathrm{CV}_i$ for a reference adult male, weighing $\SI{73}{\kilo\gram}$, were taken from~\cite{Valentin2002} where available and otherwise assumed or fitted to~\cite{Weiss2009}, cf.\ Fig.\ \ref{fig:system_model}b).
With \ac{SM}-dependent degradation rate $\kappa_i$ (e.g., dependent on organ extraction ratio $\varepsilon_i$), an \ac{SM} with residence time $t$ survives in the $i$-th lumped segment with probability $e^{-\kappa_it}$, and its flux is
\begin{equation}
j_i(t)=S_i\,f_{\mathrm{IG}}\bigl(t;\mu_i/\rho_i,\theta_i/\rho_i^2\bigr),
    \label{eq:lumped_degradation}
\end{equation}
with $\rho_i=\sqrt{1+2\theta_i\kappa_i}$ and survival probability $S_i=\exp\bigl(\frac{\mu_i}{\theta_i}(1-\rho_i)\bigr)$.
Organ clearance is given by the extraction ratio $\varepsilon_i$, i.e., the fraction of arriving \acp{SM} removed in one pass~\cite{Leevy1962HepaticFlow}.
Requiring $S_i=1-\varepsilon_i$, $\kappa_i$ is then obtained as
\begin{equation}
    \kappa_i= \frac{1}{2\theta_i}\left[ 
    \bigl(1-\tfrac{\theta_i}{\mu_i}
    \ln(1-\varepsilon_i)\bigr)^2-1\right].
    \label{eq:app_extraction_ratio}
\end{equation}
In some cases, several parallel lumped segments $i\in\mathcal{G}$ without clearance are represented by an equivalent lumped segment indexed by $\mathcal{G}$, e.g., "\textit{Rest of body}" in Fig.\ \ref{fig:system_model}b).
There, it holds that $Q_{\mathcal G}=\sum_{i\in\mathcal G}Q_i$, $V_{\mathcal G}=\sum_{i\in\mathcal G}V_i$, $\mu_{\mathcal G}=V_{\mathcal G}/Q_{\mathcal G}$, and
\begin{equation}
    \mu_\mathcal{G}\theta_\mathcal{G} = \textstyle\sum_{i\in\mathcal{G}}\frac{Q_i}{Q_\mathcal{G}}\mu_i\theta_i + \textstyle\sum_{i\in\mathcal{G}}\frac{Q_i}{Q_\mathcal{G}}\left( \mu_i - \mu_\mathcal{G}\right)^2,
    \label{eq:segment_aggregation}
\end{equation}
which matches the first two moments of the equivalent lumped segment to those of the flow-weighted mixture of its constituent segments.
Segments with clearance are kept separate. 
The aggregation in~\eqref{eq:segment_aggregation} is optional: the constituent organs can instead be represented as individual parallel lumped segments with their respective flow rates, volumes, and transit-time parameters, without modifying the closed-loop formulation.
In Fig.\ \ref{fig:system_model}b), the hepatic artery is an exception and represents a direct link with no delay or spread, i.e., flux $\delta(t)$ with $J_{\mathrm{ha}}=1$, necessary to account for the fractions of flows entering the liver. Fig.\ \ref{fig:transit_time_distributions} illustrates the \ac{IG} transit time distributions for data based on~\cite{Valentin2002} as well as the multipath transit time distribution for the resolved \ac{VN} in Fig.\ \ref{fig:system_model}c).

\subsection{Whole-Body Network Topology}
\label{subsec:whole_body_graph}

The considered whole-body network topology is shown in Fig.\ \ref{fig:system_model}b). 
In the central body part, large veins, right heart, pulmonary vasculature, left heart, and aorta and large arteries are in series.
Kidneys, gut, spleen, and pancreas, the resolved \ac{VN} with a concatenated capillary bed, and the rest of the body (brain, coronary circulation, skeletal muscle, skin, adipose, tissue, skeleton, bronchial circulation, and other tissues) follow in a parallel structure behind the central body part.
Flow rate fractions $Q_i/Q$, cardiac output $Q=\SI{6.5}{\liter\per\minute}$, and total blood volume $V=\SI{5.3}{\liter}$ are taken from~\cite[Secs.~7.4, 7.5]{Valentin2002}.
$Q_\mathrm{ROI}/Q=2\cdot 10^{-6}$ is chosen as an educated guess for a small local region of the body ($V_\mathrm{ROI}$ and $Q_\mathrm{ROI}$ are subtracted from the "\textit{Rest of the body}" segment).
The gut, spleen, and pancreas as well as the hepatic artery lead into the liver, which is connected in series.
In Fig.\ \ref{fig:system_model}b), we place $\mathrm{Tx}_1$ at the venous junction, where deoxygenated blood from the organs flows back into the central segment.
A second $\mathrm{Tx}_2$ is placed at the beginning of the resolved \ac{VN}, a \ac{Rx} is placed towards the end of the resolved \ac{VN}, cf.\ Fig.\ \ref{fig:system_model}c).
The cut node $n_0$ is placed between the inlet $n_\mathrm{in}$ at the aortic root and the outlet $n_\mathrm{out}$ at the left-ventricular outflow, Fig.\ \ref{fig:system_model}b).

In this work, the exemplary resolved \ac{VN} resembles anastomotic vasculature with radii $r_i\in[80,165]\,\SI{}{\micro\meter}$ and vessel lengths in the $\SI{}{\milli\meter}$ range, but does not correspond to a specific tissue/organ in the body. This is left for future work.
The exemplary resolved \ac{VN} comprises a single inlet $n_\mathrm{in}^\mathrm{ROI}$ and several outlets (here jointly denoted by $n_\mathrm{out}^\mathrm{ROI}$) that drain into a lumped capillary bed with manually chosen mean and spread, which in turn drains into the venous junction, cf. Fig.\ \ref{fig:system_model}c).

\subsection{Whole-Body Network Model}
\label{subsec:block_assembly}

Using the transfer function framework (Section~\ref{sec:ClosedLoopMIGHT}), each segment in Fig.\ \ref{fig:system_model}b) enters through its Laplace transform, i.e., $J_i(s)$ of~\eqref{eq:lumped_degradation} for lumped segments and $J_i(z,s)$ of Lemma~\ref{lem:segment} for the \ac{ROI}.
As path fractions~\eqref{eq:path_fraction} are products of local flow fractions and path fluxes~\eqref{eq:path_flux} are products of transfer functions, the path sum~\eqref{eq:cir} factors along the graph: segments in series multiply, parallel branches are weighted by their flow fractions. The transfer function $H_\mathrm{loop}(s)$, cf.\ \eqref{eq:cir}, in Fig.\ \ref{fig:system_model}a) then becomes
\begin{equation}
\begin{aligned}
    H_{\mathrm{loop}}(s)&=J_{\mathrm{ao}}(s)\,H_{\mathrm{sys}}(s)\,J_{\mathrm{ret}}(s),\\
    H_{\mathrm{sys}}(s)&=\tfrac{Q_{\mathrm{ki}}}{Q}J_{\mathrm{ki}}(s)+\tfrac{Q_{\mathrm{rb}}}{Q}J_{\mathrm{rb}}(s)+\tfrac{Q_{\mathrm{ROI}}}{Q}H^{\mathrm{ROI}}_{\mathrm{in},\mathrm{out}}(s)\,J_{\mathrm{bed}}(s)\\
    &\quad+\tfrac{J_{\mathrm{li}}(s)}{Q}\bigl(Q_{\mathrm{ha}}+Q_{\mathrm{gsp}}J_{\mathrm{gsp}}(s)\bigr),
\end{aligned}
    \label{eq:whole_body_loop}
\end{equation}
with the return chain $J_{\mathrm{ret}}=J_{\mathrm{ve}}J_{\mathrm{rh}}J_{\mathrm{pu}}J_{\mathrm{lh}}$ of the large veins (ve), right heart (rh), pulmonary circulation (pu), and left heart (lh), the aorta and large arteries (ao), kidneys (ki), rest of the body (rb), liver (li), hepatic artery (ha), and gut, spleen, and pancreas (gsp), and $Q_{\mathrm{ki}}+Q_{\mathrm{rb}}+Q_{\mathrm{ha}}+Q_{\mathrm{gsp}}+Q_{\mathrm{ROI}}=Q$.
While the above transfer functions correspond to lumped segments without spatial resolution, $H^{\mathrm{ROI}}_{\mathrm{in},\mathrm{out}}$ is the spatially resolved path sum $\sum_{P_k}\gamma_{P_k}\bar J_k(l_w,s;0)$ through the \ac{ROI}, and $H^{\mathrm{ROI}}_{\mathrm{in},\nu}$, $H^{\mathrm{ROI}}_{\mathrm{Tx},\nu}$, and $H^{\mathrm{ROI}}_{\mathrm{Tx},\mathrm{out}}$ are the single-pass \acp{CIR}~\eqref{eq:cir} of the \ac{ROI} alone, with path fractions relative to $Q_{\mathrm{ROI}}$.
As the \ac{Rx} at $z$ in pipe $p_\nu$ of the \ac{ROI} is reached from $n_{\mathrm{in}}$ only through ao, $H_{\mathrm{in},\nu}=J_{\mathrm{ao}}\frac{Q_{\mathrm{ROI}}}{Q}H^{\mathrm{ROI}}_{\mathrm{in},\nu}$.
For $\mathrm{Tx}_1$, pass $0$ traverses only the return chain, i.e., $H_{\mathrm{Tx},\nu}=0$ and $H_{\mathrm{Tx},\mathrm{out}}=J_{\mathrm{ret}}$, and~\eqref{eq:closed_loop_laplace} gives
\begin{equation}
    H_{\mathrm{CL},1}(z,s)=\frac{J_{\mathrm{ret}}(s)\,J_{\mathrm{ao}}(s)\,\frac{Q_{\mathrm{ROI}}}{Q}\,H^{\mathrm{ROI}}_{\mathrm{in},\nu}(z,s)}{1-H_{\mathrm{loop}}(s)}.
    \label{eq:whole_body_cl}
\end{equation}
For $\mathrm{Tx}_2$ in the \ac{ROI}, pass $0$ reaches the \ac{Rx} within the \ac{ROI} or leaves it through the capillary bed into the return chain, i.e., $H_{\mathrm{Tx},\nu}=H^{\mathrm{ROI}}_{\mathrm{Tx},\nu}$ and $H_{\mathrm{Tx},\mathrm{out}}=H^{\mathrm{ROI}}_{\mathrm{Tx},\mathrm{out}}J_{\mathrm{bed}}J_{\mathrm{ret}}$, such that
\begin{equation}
    H_{\mathrm{CL},2}\hspace*{-.5mm}(z,\hspace*{-.5mm}s)\hspace*{-.7mm}=\hspace*{-.9mm}H^{\mathrm{ROI}}_{\mathrm{Tx},\nu}\hspace*{-.5mm}(z,\hspace*{-.5mm}s)\hspace*{-.4mm}+\hspace*{-.7mm}H^{\mathrm{ROI}}_{\mathrm{Tx},\mathrm{out}}\hspace*{-.5mm}(s)J_{\mathrm{bed}}\hspace*{-.5mm}(s)H_{\mathrm{CL},1}\hspace*{-.5mm}(z,\hspace*{-.5mm}s),
    \label{eq:local_cl}
\end{equation}
where the second subscript labels the \ac{Tx}.
In the following, only a single \ac{Tx} is active in each scenario; multiple simultaneously active \acp{Tx} would be captured by superposition as in~\cite{Jakumeit2026}.

Since each lumped segment has a scaled-\ac{IG} transform, while the resolved \ac{ROI} contributes finite sums of products of such transforms, ~\eqref{eq:whole_body_cl} and~\eqref{eq:local_cl} are explicit in $s$ and are evaluated exactly by numerical Laplace inversion, while the derivation of a closed-form approximation is left for future work. 

\section{Whole-Body Simulation Results}
\label{sec:WholeBody}

\begin{figure}[t]
    \centering
    \includegraphics[width=\linewidth]{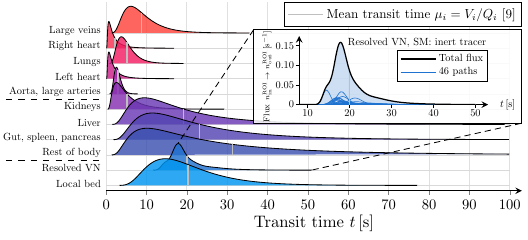}
    \caption{\textbf{Whole-body transit times.} Lumped segments are \acp{IG}~\cite{Weiss2009} with parameters from~\cite{Valentin2002}. Only the resolved \ac{VN} is multipath, with fluxes obtained from \ac{MIGHT}. Distributions normalized for comparability.}
    \label{fig:transit_time_distributions}
\end{figure}

Below, we study the propagation of two \ac{SM} species predicted by the whole-body model in Section~\ref{sec:Multiscale}, with $D=\SI{5e-8}{\meter\squared\per\second}$.
The first is \ac{ICG}, injected as a bolus intravenously at $\mathrm{Tx}_1$ (or intra-arterially at $\mathrm{Tx}_2$).
\ac{ICG} is cleared only in the liver, with a hepatic extraction ratio of $\varepsilon_\mathrm{li}=0.7285$~\cite[Tab.~IX]{Leevy1962HepaticFlow}, i.e., $\kappa_\mathrm{li}=\SI{0.0906}{\per\second}$ via~\eqref{eq:app_extraction_ratio}, and does not degrade chemically \textit{in vivo} on the considered time scales~\cite{Leevy1962HepaticFlow}, i.e., its global degradation rate is zero.
The second is a generic tracer, for which we vary the organ extraction ratios and global degradation rates; tracers that are neither cleared nor degraded are denoted as \textit{inert tracers}.
Concentrations refer to whole blood, i.e., we neglect the plasma confinement of protein-bound \acp{SM}.

Fig.~\ref{fig:tx1_vs_tx2_signals} shows the concentration at the center position $z_\mathrm{Rx}$ of the \ac{Rx} for a bolus of $m_0=\SI{1}{\micro\gram}$ of \ac{ICG} or inert tracer injected at either $\mathrm{Tx}_1$ (venous junction) or $\mathrm{Tx}_2$ (inlet pipe of the local \ac{VN}), cf.~Figs.~\ref{fig:system_model}b) and~\ref{fig:system_model}c).
As $\mathrm{Tx}_2$ is close to $z_\mathrm{Rx}$, the initial peak of the $\mathrm{Tx}_2$ curves resolves the multipath structure of the local \ac{VN}.
This multimodality of $c_\mathrm{CL}$ is lost in all subsequent circulations, which are dominated by systemic dispersion, and it is absent even in the first peak for $\mathrm{Tx}_1$, where the much larger distance to $z_\mathrm{Rx}$ disperses the signal before reaching $z_\mathrm{Rx}$.
In the long term, the inert tracer approaches the equilibrium in~\eqref{eq:closed_loop_equilibrium}, whereas \ac{ICG} decays to zero due to hepatic clearance.
Note, also, that the \ac{ICG} curves for $\mathrm{Tx}_1$ and $\mathrm{Tx}_2$ converge for large $t$, indicating that the long-term distribution is governed by clearance rather than by the injection site.

\begin{figure}
    \centering
    \includegraphics[width=\linewidth]{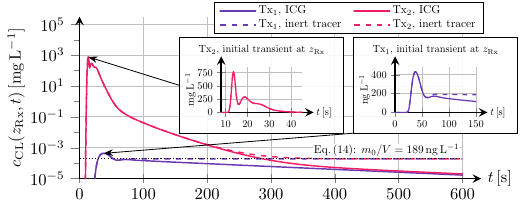}
    \caption{\textbf{Concentration at $z_\mathrm{Rx}$ for bolus injection at $\mathrm{Tx}_1$ or $\mathrm{Tx}_2$.} A bolus injection of $m_0=\SI{1}{\micro\gram}$ \ac{ICG} or inert tracer is simulated.}
    \label{fig:tx1_vs_tx2_signals}
\end{figure}

\begin{figure}
    \centering
    \includegraphics[width=\linewidth]{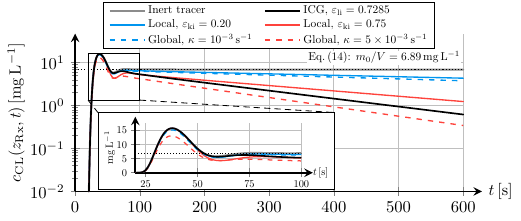}
    \caption{\textbf{Long-term systemic degradation of different \acp{SM}.} Varying renal/hepatic extraction ratios and global degradation rates are shown. A bolus injection of dose $m_0=\SI{36.5}{\milli\gram}$ at $\mathrm{Tx}_1$ is simulated.}
    \label{fig:long_term_decay}
\end{figure}

Fig.~\ref{fig:long_term_decay} shows the concentration at $z_\mathrm{Rx}$ for a $m_0=\SI{36.5}{\milli\gram}$ bolus at $\mathrm{Tx}_1$, for \ac{ICG} (black), an inert tracer (grey), and tracers with renal extraction (solid) or global degradation (dashed).
Since the direct path from $\mathrm{Tx}_1$ to $z_\mathrm{Rx}$ passes no clearing organ, the locally cleared species share the inert tracer's first peak and deviate only later, whereas global degradation attenuates the signal from the outset.
Note, also, that \ac{ICG} and the strongly cleared renal tracer (red solid) have nearly equal extraction ratios, yet their curves cross at $\approx\SI{110}{\second}$.
This is because the kidney has much faster transit times than the liver, cf.\ Fig.\ \ref{fig:transit_time_distributions}.
Initially, the signal affected by renal clearance (red solid) therefore lies below the \ac{ICG} curve. 
Only after $\approx\SI{110}{\second}$, the liver's larger share of the cardiac output compensates for the slower transit times and the \ac{ICG} concentration falls below the renal clearance signal (red solid).
Moreover, comparing the dashed lines to their solid counterparts, it can be observed that small global degradation rates have a much larger impact than large local degradation rates in the long term.

In summary, Figs.\ \ref{fig:tx1_vs_tx2_signals} and \ref{fig:long_term_decay} qualitatively reproduce the systemic concentration profile measured in mammals~\cite{Weiss2009, Vakilipoor2025} and predicted by \ac{PBPK} models~\cite{Rowland2011PBPK}, while additionally enabling the investigation of the impact of an \ac{ROI} around the \ac{Rx}.

\section{Conclusion}
\label{sec:Conclusion}
In this work, we introduced \textit{Multiscale \ac{MIGHT}}, a closed-loop extension of the \ac{MIGHT} model in~\cite{Jakumeit2026} with local and global \ac{SM} degradation.
We formulated the model as a Laplace-domain feedback system and derived the exact \ac{CIR} for arbitrary \ac{Tx} and \ac{Rx} positions in closed-loop \acp{VN}, as well as a closed-form \ac{IG}-mixture \ac{CIR} approximation, both of which agree closely with COMSOL simulations.
Based on this, we proposed a multiscale whole-body model, comprising a vessel-resolved \ac{ROI} and lumped segments for the remaining organs.
Unlike pharmacokinetic models, this model can predict the concentration at a specific \ac{Rx} position within an \ac{ROI}, while also capturing the systemic circulation.

In future work, we aim to validate the whole-body model with clinical data of systemic distribution, extend it with more detailed models for selected organs and tissues, and employ it to aid the design of diagnostic and therapeutic \ac{MC} applications.

\appendices
\section{Proof of Lemma~\ref{lem:segment}}
\label{app:proof_segment}

Under (A1) and (A2), the cross-sectionally averaged concentration $c_i(z,t)$ of non-degraded \acp{SM} in pipe $p_i$, caused by a single \ac{SM} entering at $z=0$ at time $t=0$, obeys the \ac{1-D} advection--diffusion equation with a first-order sink,
\begin{equation}
    \partial_tc_i(z,t)=D_{\mathrm{eff},i}\,\partial_{zz}c_i(z,t)-u_i\,\partial_zc_i(z,t)-\kappa_i\,c_i(z,t),
    \label{eq:advection_diffusion_degradation}
\end{equation}
with the initial condition $c_i(z,0)=0$, the influx boundary condition at the pipe inlet~\cite[Eq.~(35)]{Jakumeit2026}
\begin{equation}
    -D_{\mathrm{eff},i}\,\partial_zc_i(0,t)+u_i\,c_i(0,t)=\delta(t),
    \label{eq:app_influx}
\end{equation}
and $c_i(z,t)\to0$ for $z\to\infty$.
Compared with~\cite[Eq.~(34)]{Jakumeit2026}, degradation adds only the sink $-\kappa_ic_i$, since, by (A2), degraded \acp{SM} do not contribute to $c_i$ and are not observed.
The pipe flux follows as~\cite[Eq.~(40)]{Jakumeit2026}
\begin{equation}
    j_i(z,t)=u_i\,c_i(z,t)-D_{\mathrm{eff},i}\,\partial_zc_i(z,t).
    \label{eq:app_flux_def}
\end{equation}
Applying the one-sided Laplace transform with respect to $t$, denoted by $\mathcal L\{\cdot\}$ with complex frequency $s$, and exploiting the initial condition $c_i(z,0)=0$, transforms~\eqref{eq:advection_diffusion_degradation} into
\begin{equation}
    D_{\mathrm{eff},i}\,\partial_{zz}C_i(z,s)-u_i\,\partial_zC_i(z,s)-(s+\kappa_i)\,C_i(z,s)=0,
    \label{eq:app_laplace_ode}
\end{equation}
where $C_i(z,s)=\mathcal L\{c_i(z,t)\}$; here and in the following, upper-case letters denote Laplace transforms.
Hence,~\eqref{eq:app_laplace_ode} coincides with its degradation-free counterpart~\cite[Eq.~(36)]{Jakumeit2026} with $s$ replaced by $s+\kappa_i$, i.e., degradation acts as a shift in the Laplace domain.

As in~\cite[Eq.~(37)]{Jakumeit2026}, we use the ansatz $C_i(z,s)=A_i(s)\,e^{\lambda_i(s)z}$. Inserting it into~\eqref{eq:app_laplace_ode} gives $D_{\mathrm{eff},i}\lambda_i^2-u_i\lambda_i-(s+\kappa_i)=0$, i.e.,
\begin{equation}
    \lambda_i(s)=\frac{u_i-\sqrt{u_i^2+4D_{\mathrm{eff},i}(s+\kappa_i)}}{2D_{\mathrm{eff},i}},
    \label{eq:app_lambda}
\end{equation}
where the minus sign is required for $c_i(z,t)\to0$ as $z\to\infty$, cf.~\cite[Eq.~(38)]{Jakumeit2026}.
The Laplace transform of the influx boundary condition~\eqref{eq:app_influx}, $-D_{\mathrm{eff},i}\,\partial_zC_i(0,s)+u_i\,C_i(0,s)=1$, yields $A_i(s)=1/(u_i-D_{\mathrm{eff},i}\lambda_i(s))$, cf.~\cite[Eq.~(39)]{Jakumeit2026}.
Transforming the flux~\eqref{eq:app_flux_def} and inserting $C_i(z,s)$ finally gives
\begin{align}
    J_i(z,s) &= u_iC_i(z,s)-D_{\mathrm{eff},i}\,\partial_zC_i(z,s)\\
    &=\bigl(u_i-D_{\mathrm{eff},i}\lambda_i(s)\bigr)A_i(s)\,e^{\lambda_i(s)z}
    =e^{\lambda_i(s)z}.
    \label{eq:app_pipe_laplace}
\end{align}

With the effective velocity $\tilde u_i=\sqrt{u_i^2+4D_{\mathrm{eff},i}\kappa_i}$ from Lemma~\ref{lem:segment}, the radicand in~\eqref{eq:app_lambda} becomes $u_i^2+4D_{\mathrm{eff},i}(s+\kappa_i)=\tilde u_i^2+4D_{\mathrm{eff},i}s$, such that by adding and subtracting $\tilde u_i$ in the numerator, \eqref{eq:app_pipe_laplace} can be rewritten as
\begin{equation}
    J_i(z,s)
    =\underbrace{\exp\Bigl(\frac{z\,(u_i-\tilde u_i)}{2D_{\mathrm{eff},i}}\Bigr)}_{=\,S_i(z)}\,
    \exp\Bigl(\frac{z\,\bigl(\tilde u_i-\sqrt{\tilde u_i^2+4D_{\mathrm{eff},i}s}\bigr)}{2D_{\mathrm{eff},i}}\Bigr),
    \label{eq:app_split}
\end{equation}
where the first factor does not depend on $s$ and equals $S_i(z)$ in~\eqref{eq:segment_parameters}.
The second factor is the degradation-free pipe flux~\cite[Eq.~(41)]{Jakumeit2026} with $u_i$ replaced by $\tilde u_i$ and $D_{\mathrm{eff},i}$ unchanged.
With the Laplace transform of the \ac{IG} density~\cite{Chhikara1989},
\begin{equation}
    F_{\mathrm{IG}}(s;\mu,\theta)=\mathcal L\{f_{\mathrm{IG}}(t;\mu,\theta)\}=\exp\Bigl(\frac{\mu}{\theta}\bigl(1-\sqrt{1+2\theta s}\bigr)\Bigr),
    \label{eq:app_FIG}
\end{equation}
and $\mu=\mu_i(z)=z/\tilde u_i$ and $\theta=\theta_i=2D_{\mathrm{eff},i}/\tilde u_i^2$, we have $\mu/\theta=z\tilde u_i/(2D_{\mathrm{eff},i})$ and $\sqrt{1+2\theta s}=\sqrt{\tilde u_i^2+4D_{\mathrm{eff},i}s}/\tilde u_i$, such that the second factor in~\eqref{eq:app_split} equals $F_{\mathrm{IG}}(s;\mu_i(z),\theta_i)$, i.e.,
\begin{equation}
    J_i(z,s)=S_i(z)\,F_{\mathrm{IG}}\bigl(s;\mu_i(z),\theta_i\bigr).
    \label{eq:app_pipe_laplace_ig}
\end{equation}

As $S_i(z)$ does not depend on $s$, the inverse Laplace transform $\mathcal L^{-1}\{\cdot\}$ of~\eqref{eq:app_pipe_laplace_ig} acts only on $F_{\mathrm{IG}}$, and~\cite[Eq.~(42)]{Jakumeit2026} with $u_i$ replaced by $\tilde u_i$ gives
\begin{equation}
\begin{aligned}
    j_i(z,t)&=S_i(z)\,\frac{z}{\sqrt{4\pi D_{\mathrm{eff},i}t^3}}\exp\Bigl(-\frac{(z-\tilde u_it)^2}{4D_{\mathrm{eff},i}t}\Bigr)\\
    &=S_i(z)\,\frac{\mu_i(z)}{\sqrt{2\pi\theta_it^3}}\exp\Bigl(-\frac{(t-\mu_i(z))^2}{2\theta_it}\Bigr)\\
    &=S_i(z)\,f_{\mathrm{IG}}\bigl(t;\mu_i(z),\theta_i\bigr),
\end{aligned}
    \label{eq:app_time_domain}
\end{equation}
where the second step uses $\mu_i(z)=z/\tilde u_i$ and $\theta_i=2D_{\mathrm{eff},i}/\tilde u_i^2$, cf.~\cite[Eq.~(43)]{Jakumeit2026}, which gives~\eqref{eq:segment_response}.

Since $F_{\mathrm{IG}}(0;\cdot)=1$,~\eqref{eq:app_pipe_laplace_ig} gives $S_i(z)=J_i(z,0)=\int_0^\infty j_i(z,t)\,\mathrm dt$, i.e., $S_i(z)$ is the probability that the \ac{SM} reaches $z$ before it degrades, and $j_i(z,t)/S_i(z)=f_{\mathrm{IG}}(t;\mu_i(z),\theta_i)$ is the density of its \ac{FPT} conditioned on this event. This concludes the proof. \qed

\textit{Remark:} 
Equivalently, the first line of~\eqref{eq:app_time_domain} equals the degradation-free flux~\cite[Eq.~(42)]{Jakumeit2026} multiplied by $e^{-\kappa_it}$, since, by assumption (A2), each \ac{SM} survives its time $t$ in the pipe with probability $e^{-\kappa_it}$. As $\tilde u_i\geq u_i$, the surviving \acp{SM} are faster and less dispersed than without degradation, i.e., $\mu_i(z)\leq z/u_i$ and $\theta_i\leq2D_{\mathrm{eff},i}/u_i^2$, while $S_i(z)\leq1$ decreases exponentially in $z$.
Moreover, as the exponent $\lambda_i(s)z$ in~\eqref{eq:app_pipe_laplace} is linear in $z$, $J_i(z,s)\,J_i(z'-z,s)=J_i(z',s)$ for $0<z<z'$; correspondingly, $\mu_i(z)$ and $\ln S_i(z)$ are linear in $z$, while $\theta_i$ does not depend on $z$.
Hence, splitting a pipe at any position is exact, and a release at $z$ is equivalent to a release at the entrance of a pipe of length $l_i-z$, cf.~\cite[App.~A]{Jakumeit2026}, which justifies the distances in~\eqref{eq:path_flux}.

\section{Proof of Theorem~\ref{thm:closed_loop}}
\label{app:proof_closed_loop}
\label{app:reference_node}

The proof only required the fluxes $j_i$ and flow rates $Q_i$ of the segments; hence, it holds for pipes as well as for the lumped segments introduced in Section~\ref{sec:Multiscale}.

Consider a single \ac{SM} released by the \ac{Tx} at $t=0$, with its passes as defined in Section~\ref{subsec:network}.
Let $T_L$ denote the time at which the \ac{SM} crosses $z$ in $p_\nu$ during pass $L\geq0$, where $T_L=\infty$ if it does not cross $z$ in this pass, e.g., because it degrades before or takes a path that bypasses $p_\nu$.
As the open-loop \ac{VN} is acyclic, each path visits $p_\nu$ at most once, such that each pass contains at most one crossing. 
Hence, the number of crossings of $z$ in $[0,t]$ is $N_z(t)=\sum_{L=0}^{\infty}\mathbb 1\{T_L\leq t\}$, with the indicator function $\mathbb 1\{\cdot\}$, and its expected value follows as 
\begin{equation}
    \mathbb E\bigl[N_z(t)\bigr]=\sum_{L=0}^{\infty}\Pr\{T_L\leq t\}, 
    \label{eq:app_crossings}
\end{equation}
where $\mathbb E[\cdot]$ denotes expectation and we used $\mathbb E[\mathbb 1\{A\}]=\Pr\{A\}$.
As all terms are nonnegative, differentiating~\eqref{eq:app_crossings} termwise with respect to $t$ gives the expected crossing rate $h_{\mathrm{CL}}(z,t)=\sum_{L\geq0}h_{\mathrm{CL},L}(z,t)$, where $h_{\mathrm{CL},L}(z,\cdot)$ is the defective density of $T_L$, whose mass, i.e., its integral over $t$, is $\Pr\{T_L<\infty\}$ and generally smaller than one; here and below, the mass of a nonnegative function is its integral over $t$.
Now, $h_{\mathrm{CL},L}$ for $L=0$ and $L\geq1$ remains to be determined.

In pass~$0$, the \ac{SM} starts at $z_{\mathrm{Tx}}$ in $p_q$ and can cross $z$ only along a path $P_k\in\mathcal P_{\mathrm{Tx},\nu}$, cf.~\eqref{eq:path_sets}, which it follows with probability $\gamma_{P_k}$ from~\eqref{eq:path_fraction} by (A3); along all other paths, it bypasses $p_\nu$.
Along $P_k$, the transit times through the traversed segments are independent by (A3), and the survival probability given these transit times factorizes over the segments by (A2).
Hence, the defective density of the arrival time at $z$ along $P_k$ is the convolution of the segment fluxes, evaluated at the traveled distances, i.e., the path flux $\bar j_k(z,t;z_{\mathrm{Tx}})$ in~\eqref{eq:path_flux}, where each pipe flux is the defective \ac{FPT} density in Lemma~\ref{lem:segment}.
The law of total probability over the paths then yields
\begin{equation}
    h_{\mathrm{CL},0}(z,t)=\sum_{P_k\in\mathcal P_{\mathrm{Tx},\nu}}\gamma_{P_k}\,\bar j_k(z,t;z_{\mathrm{Tx}})=h_{\mathrm{Tx},\nu}(z,t),
    \label{eq:app_pass0}
\end{equation}
cf.~\eqref{eq:cir}.
The same holds for a \ac{Tx} at the entrance of a segment $p_q$, e.g., a lumped segment in Section~\ref{sec:Multiscale}, with $z_{\mathrm{Tx}}=0$ and $p_q$ traversed entirely.

A crossing in pass $L\geq1$ requires three stages without degradation: (a)~reaching $n_{\mathrm{out}}$ from the \ac{Tx} in pass~$0$, after time $\tau_0$; (b)~completing circulations $\ell=1,\ldots,L-1$, after time $\tau_\ell$; and (c)~reaching $z$ from $n_{\mathrm{in}}$ in pass~$L$, after time $\tau_L$. As in pass~$0$, the law of total probability over the paths in $\mathcal P_{\mathrm{Tx},\mathrm{out}}$, $\mathcal P_{\mathrm{loop}}$, and $\mathcal P_{\mathrm{in},\nu}$, respectively, shows that $\tau_0$, $\tau_\ell$, and $\tau_L$ have the defective densities $h_{\mathrm{Tx},\mathrm{out}}$, $h_{\mathrm{loop}}$, and $h_{\mathrm{in},\nu}(z,\cdot)$ in~\eqref{eq:cir}, where each is $\infty$ if the \ac{SM} degrades or bypasses the respective end point.

By (A3), the routing decisions and transit times of different passes are independent, and by (A2), the survival probability given all transit times factorizes over the passes.
Hence, the stage times are independent, and
\begin{equation}
    T_L=\tau_0+\sum_{\ell=1}^{L-1}\tau_\ell+\tau_L
    \label{eq:app_TL}
\end{equation}
has the defective density
\begin{equation}
h_{\mathrm{CL},L}(z,t)=h_{\mathrm{Tx},\mathrm{out}}(t)*h_{\mathrm{loop}}^{*(L-1)}(t)*h_{\mathrm{in},\nu}(z,t),\qquad L\geq1,
    \label{eq:app_passL}
\end{equation}
as the density of a sum of independent random variables is the convolution of their densities, with $h_{\mathrm{loop}}^{*0}=\delta$ for $L=1$. Together with~\eqref{eq:app_pass0}, this gives~\eqref{eq:closed_loop_series}.

In the Laplace domain the convolution theorem turns~\eqref{eq:app_pass0} and~\eqref{eq:app_passL} into $H_{\mathrm{CL},0}(z,s) = H_{\mathrm{Tx},\nu}(z,s)$ and
\begin{align}
    &H_{\mathrm{CL},L}(z,s)=H_{\mathrm{Tx},\mathrm{out}}(s)\,H_{\mathrm{loop}}^{L-1}(s)\,H_{\mathrm{in},\nu}(z,s), &L\geq1.
    \label{eq:app_passL_laplace}
\end{align}
The loop density $h_{\mathrm{loop}}\geq0$ has the mass $S_{\mathrm{loop}}=H_{\mathrm{loop}}(0)=\sum_{P_k\in\mathcal P_{\mathrm{loop}}}\gamma_{P_k}\int_0^\infty\bar j_k(l_w,t;0)\,\mathrm dt\leq\sum_{P_k\in\mathcal P_{\mathrm{loop}}}\gamma_{P_k}=1$, since every path flux has a mass of at most one and the path fractions sum to one by the conservation of flow. 
Moreover, as every circulation has strictly positive duration, $h_{\mathrm{loop}}$ has no mass at $t=0$, such that $e^{-at}<1$ almost everywhere on its support for $a>0$.
Hence, for $a=\operatorname{Re}\{s\}>0$,
\begin{equation}
    |H_{\mathrm{loop}}(s)|\leq\int_0^\infty h_{\mathrm{loop}}(t)\,e^{-at}\,\mathrm dt=H_{\mathrm{loop}}(a)<1,
    \label{eq:app_loop_bound}
\end{equation}
where the strict inequality holds for $S_{\mathrm{loop}}>0$ because $H_{\mathrm{loop}}(a)<H_{\mathrm{loop}}(0)\leq1$, and trivially for $S_{\mathrm{loop}}=0$.
As all terms of~\eqref{eq:closed_loop_series} are nonnegative and~\eqref{eq:app_loop_bound} ensures that the resulting series converges, the series can be transformed termwise yielding
\begin{align}
    H_{\mathrm{CL}}(z,s)&=H_{\mathrm{Tx},\nu}(z,s)\\&+H_{\mathrm{Tx},\mathrm{out}}(s)\,H_{\mathrm{in},\nu}(z,s)\sum_{L=1}^{\infty}H_{\mathrm{loop}}^{L-1}(s)\\
    &=H_{\mathrm{Tx},\nu}(z,s)+\frac{H_{\mathrm{Tx},\mathrm{out}}(s)\,H_{\mathrm{in},\nu}(z,s)}{1-H_{\mathrm{loop}}(s)},
    \label{eq:app_geometric}
\end{align}
i.e.,~\eqref{eq:closed_loop_laplace}.
This concludes the proof.
\qed

\section{Proof of Corollary~\ref{cor:characteristics}}
\label{app:proof_characteristics}

As in Appendix~\ref{app:proof_closed_loop}, the proof requires only the fluxes and flow rates of segments; hence, it also holds for the lumped segments in Section~\ref{sec:Multiscale}.

\emph{Part (i).} For nonnegative functions $f$ and $g$, the mass of their convolution is the product of their masses, $\int_0^\infty(f*g)(t)\,\mathrm dt=\int_0^\infty f(t)\,\mathrm dt\int_0^\infty g(t)\,\mathrm dt$, cf.\ Appendix~\ref{app:proof_closed_loop}.
Hence, by~\eqref{eq:closed_loop_series}, $h_{\mathrm{CL},0}$ has the mass $S_{\mathrm{Tx},\nu}(z)$ of $h_{\mathrm{Tx},\nu}$, and $h_{\mathrm{CL},L}$ has the mass $S_{\mathrm{Tx},\mathrm{out}}S_{\mathrm{loop}}^{L-1}S_{\mathrm{in},\nu}(z)$ for $L\geq1$, i.e., the product of the masses of $h_{\mathrm{Tx},\mathrm{out}}$, $h_{\mathrm{loop}}$, and $h_{\mathrm{in},\nu}$.

As all terms are nonnegative, integration and summation can be exchanged, and for $S_{\mathrm{loop}}<1$, the geometric series converges:
\begin{equation}
\begin{aligned}
    \bar n_{\mathrm{Rx}}&=\sum_{L=0}^{\infty}\int_0^\infty h_{\mathrm{CL},L}(z_{\mathrm{Rx}},t)\,\mathrm dt\\
    &=S_{\mathrm{Tx},\nu}(z_{\mathrm{Rx}})+S_{\mathrm{Tx},\mathrm{out}}\,S_{\mathrm{in},\nu}(z_{\mathrm{Rx}})\sum_{L=1}^{\infty}S_{\mathrm{loop}}^{L-1},
\end{aligned}
    \label{eq:app_crossings_sum}
\end{equation}
which gives~\eqref{eq:expected_crossings}.

\emph{Part (ii).} For $\kappa_i=\kappa$ in all segments, each segment flux is $e^{-\kappa t}$ times its counterpart for $\kappa=0$, i.e., the \ac{SM} survives time $t$ in the segment with probability $e^{-\kappa t}$. 
The factor $e^{-\kappa t}$ distributes over convolutions as follows
\begin{align}
    \bigl(e^{-\kappa t}f\bigr)*\bigl(e^{-\kappa t}g\bigr)(t)
    &=\int_0^t e^{-\kappa\tau}f(\tau)\,e^{-\kappa(t-\tau)}g(t-\tau)\,\mathrm d\tau\nonumber\\
    &=e^{-\kappa t}\,(f*g)(t),
    \label{eq:app_tilt_conv}
\end{align}
and over sums, while the path fractions $\gamma_{P_k}$ do not depend on $\kappa$. Hence, the path fluxes~\eqref{eq:path_flux}, the single-pass \acp{CIR}~\eqref{eq:cir}, and the pass densities in~\eqref{eq:closed_loop_series} are all $e^{-\kappa t}$ times their counterparts for $\kappa=0$, and so is their sum $h_{\mathrm{CL}}(z,t)$. 

\emph{Part (iii).} Let $\kappa_i=0$ for all $i$, and define $\varphi=h_{\mathrm{Tx},\mathrm{out}}*h_{\mathrm{in},\nu}(z,\cdot)$ and $\psi=h_{\mathrm{CL}}(z,\cdot)-h_{\mathrm{Tx},\nu}(z,\cdot)$.
By~\eqref{eq:closed_loop_series}, $\psi=\sum_{L\geq1}\varphi*h_{\mathrm{loop}}^{*(L-1)}$, which satisfies the renewal equation
\begin{equation}
    \psi=\varphi+h_{\mathrm{loop}}*\psi.
    \label{eq:app_renewal}
\end{equation}
Without degradation, $h_{\mathrm{loop}}$ is a proper density, since its mass is $\sum_{P_k\in\mathcal P_{\mathrm{loop}}}\gamma_{P_k}=1$ by the conservation of flow. 
Moreover, every \ac{IG} flux decays as $t^{-3/2}e^{-t/(2\theta_i)}$ for $t\to\infty$, such that $h_{\mathrm{loop}}$ has a finite mean $m_{\mathrm{loop}}=\int_0^\infty t\,h_{\mathrm{loop}}(t)\,\mathrm dt$, $\varphi$ is continuous and decays exponentially, hence is directly integrable, and $h_{\mathrm{Tx},\nu}(z,t)\to0$.
The key renewal theorem~\cite{Asmussen2003} applied to~\eqref{eq:app_renewal} then gives
\begin{equation}
    \lim_{t\to\infty}h_{\mathrm{CL}}(z,t)=\lim_{t\to\infty}\psi(t)=\frac{1}{m_{\mathrm{loop}}}\int_0^\infty\varphi(t)\,\mathrm dt.
    \label{eq:app_renewal_limit}
\end{equation}
By Part~(i), $\int_0^\infty\varphi(t)\,\mathrm dt=S_{\mathrm{Tx},\mathrm{out}}\,S_{\mathrm{in},\nu}(z)$, where, without degradation, $S_{\mathrm{Tx},\mathrm{out}}=\sum_{P_k\in\mathcal P_{\mathrm{Tx},\mathrm{out}}}\gamma_{P_k}=1$ and $S_{\mathrm{in},\nu}(z)=\sum_{P_k\in\mathcal P_{\mathrm{in},\nu}}\gamma_{P_k}=Q_\nu/Q$ by the conservation of flow.
Moreover, the mean of a convolution is the sum of the means, such that the mean of the path flux of $P_k$ is the sum of the mean transit times $\mu_i=V_i/Q_i$ of its segments, which holds for pipes by Lemma~\ref{lem:segment} with $\kappa_i=0$, i.e., $\mu_i=l_i/u_i=V_i/Q_i$, and for lumped segments by definition (see Section~\ref{sec:Multiscale}). Since the circulations traversing segment $p_i$, i.e., the paths $P_k\in\mathcal P_{\mathrm{loop}}$ with $i\in\mathcal E_k$, carry the fraction $Q_i/Q$ of the flow, exchanging the sums over paths and segments $i$ gives the mean circulation time as follows
\begin{align}
    m_{\mathrm{loop}}
    &=\sum_{P_k\in\mathcal P_{\mathrm{loop}}}\gamma_{P_k}\sum_{i\in\mathcal E_k}\mu_i
    =\sum_{i}\mu_i\sum_{\substack{P_k\in\mathcal P_{\mathrm{loop}}\\ i\in\mathcal E_k}}\gamma_{P_k}\nonumber\\
    &=\sum_i\frac{V_i}{Q_i}\,\frac{Q_i}{Q}=\frac{V}{Q},
    \label{eq:app_mean_circulation}
\end{align}
Inserting both into~\eqref{eq:app_renewal_limit} gives $h_{\mathrm{CL}}(z,t)\to(Q_\nu/Q)/(V/Q)=Q_\nu/V$ for $t\to\infty$, which concludes the proof.
\qed

\section{Proof of Proposition~\ref{lem:series}}
\label{app:proof_series}

Let $T_i$ denote the transit time of traversal $i\in\mathcal C$.
By Lemma~\ref{lem:segment}, the flux $j_i(t)=S_if_{\mathrm{IG}}(t;\mu_i,\theta_i)$ is the defective density of the \ac{SM} completing traversal $i$ after time $t$ without degrading.
As the transit times are independent by (A3) and, by (A2), survival in one traversal depends only on the time spent in it, the joint defective density of completing all traversals after times $t_i$ factorizes,
\begin{equation}
    \prod_{i\in\mathcal C}S_if_{\mathrm{IG}}(t_i;\mu_i,\theta_i)
    =S_{\mathcal C}\prod_{i\in\mathcal C}f_{\mathrm{IG}}(t_i;\mu_i,\theta_i).
    \label{eq:app_chain_joint}
\end{equation}
Hence, the chain is completed before degradation with probability $S_{\mathcal C}=\prod_iS_i$, and, conditioned on completing the chain, the $T_i$ are independent $\mathrm{IG}(\mu_i,\theta_i)$ variables, such that the left-hand side of~\eqref{eq:series_rule} is $S_{\mathcal C}$ times the density of $T_{\mathcal C}=\sum_iT_i$.

The cumulants of an \ac{IG} variable follow from the derivatives of $\ln F_{\mathrm{IG}}(s;\mu,\theta)=\frac{\mu}{\theta}\bigl(1-\sqrt{1+2\theta s}\bigr)$ at $s=0$ as $\mu$, $\mu\theta$, and $3\mu\theta^2$~\cite[Eq.~(50)]{Jakumeit2026}.
As cumulants of independent variables add, $T_{\mathcal C}$ has mean, variance, and third cumulant
\begin{equation}
    \textstyle\sum_i\mu_i=\hat\mu_{\mathcal C},\qquad
    \sum_i\mu_i\theta_i=\hat\mu_{\mathcal C}\hat\theta_{\mathcal C},\qquad
    3\sum_i\mu_i\theta_i^2,
    \label{eq:app_chain_cumulants}
\end{equation}
respectively, whereas the approximation $\mathrm{IG}(\hat\mu_{\mathcal C},\hat\theta_{\mathcal C})$ has the cumulants $\hat\mu_{\mathcal C}$, $\hat\mu_{\mathcal C}\hat\theta_{\mathcal C}$, and $3\hat\mu_{\mathcal C}\hat\theta_{\mathcal C}^2$.
Thus,~\eqref{eq:series_rule} preserves the mass $S_{\mathcal C}$ and the conditional mean and variance.

For the third cumulant, let $w_i=\mu_i/\hat\mu_{\mathcal C}$, such that $\sum_iw_i=1$ and $\sum_iw_i\theta_i=\hat\theta_{\mathcal C}$.
Jensen's inequality then gives
\begin{equation}
    3\textstyle\sum_i\mu_i\theta_i^2=3\hat\mu_{\mathcal C}\sum_iw_i\theta_i^2\geq3\hat\mu_{\mathcal C}\hat\theta_{\mathcal C}^2,
    \label{eq:app_jensen}
\end{equation}
with equality if and only if all $\theta_i$ are equal, and the difference is $3\hat\mu_{\mathcal C}\operatorname{Var}_w(\theta_i)$ with $\operatorname{Var}_w(\theta_i)=\sum_iw_i(\theta_i-\hat\theta_{\mathcal C})^2$.
As the variances coincide, dividing by $(\hat\mu_{\mathcal C}\hat\theta_{\mathcal C})^{3/2}$ gives the skewness error
\begin{equation}
    \Delta\xi_{\mathcal C}=3\,\frac{\operatorname{Var}_w(\theta_i)}{\hat\theta_{\mathcal C}^2}\sqrt{\frac{2}{\mathrm{Pe}_{\mathcal C}}}\geq0,\qquad
    \mathrm{Pe}_{\mathcal C}=\frac{2\hat\mu_{\mathcal C}}{\hat\theta_{\mathcal C}},
    \label{eq:app_skewness}
\end{equation}
i.e., the approximation underestimates the conditional skewness, by an amount that vanishes for large P\'eclet numbers, cf.~\cite[Theorem~3]{Jakumeit2026}.

If all $\theta_i=\theta$, then $\prod_iF_{\mathrm{IG}}(s;\mu_i,\theta)=\exp\bigl(\frac{\hat\mu_{\mathcal C}}{\theta}(1-\sqrt{1+2\theta s})\bigr)=F_{\mathrm{IG}}(s;\hat\mu_{\mathcal C},\theta)$, i.e.,~\eqref{eq:series_rule} is exact; otherwise, $\Delta\xi_{\mathcal C}>0$, and it is not.
Traversals with $\mu_i=0$, e.g., delay-free connections with flux $\delta(t)$, leave the convolution unchanged and are omitted from $\mathcal C$.
This concludes the proof.
\qed

\section{Proof of Proposition~\ref{prop:mixture}}
\label{app:proof_mixture}

Substituting the single-pass \acp{CIR} from~\eqref{eq:cir} into~\eqref{eq:closed_loop_series} and expanding the convolutions over the path sums gives, for every $L\geq0$,
\begin{equation}
    h_{\mathrm{CL},L}(z,t)=\sum_{\omega\in\Omega_L}\Bigl(\prod_{\ell=0}^{L}\gamma_{P_{k_\ell}}\Bigr)\bigl(\bar j_{k_0}*\cdots*\bar j_{k_L}\bigr)(t),
    \label{eq:app_trajectory_expansion}
\end{equation}
where $\bar j_{k_0}$ starts at $z_{\mathrm{Tx}}$, $\bar j_{k_L}$ ends at $z$, and all other path fluxes start at $n_{\mathrm{in}}$ and end at $n_{\mathrm{out}}$, cf.~\eqref{eq:cir}.
By~\eqref{eq:path_flux}, each path flux is itself the convolution of its segment fluxes, evaluated at the distances traveled, such that each term in~\eqref{eq:app_trajectory_expansion} is the flux of the chain $\mathcal C_\omega$ of all segment traversals of $\omega$, in which a segment may recur in different passes.

Applying Proposition~\ref{lem:series} to $\mathcal C_\omega$ replaces this flux by $S_{\mathcal C_\omega}f_{\mathrm{IG}}(t;\hat\mu_{\mathcal C_\omega},\hat\theta_{\mathcal C_\omega})$.
As the survival probabilities multiply and the conditional means $\mu_i$ and variances $\mu_i\theta_i$ add, grouping the traversals by passes gives
\begin{equation}
\begin{aligned}
    S_{\mathcal C_\omega}&=\textstyle\prod_{\ell=0}^{L}S_{k_\ell},\qquad
    \hat\mu_{\mathcal C_\omega}=\sum_{\ell=0}^{L}\hat\mu_{k_\ell}=\hat\mu_\omega,\\
    \hat\mu_{\mathcal C_\omega}\hat\theta_{\mathcal C_\omega}&=\textstyle\sum_{\ell=0}^{L}\hat\mu_{k_\ell}\hat\theta_{k_\ell}=\hat\mu_\omega\hat\theta_\omega,
\end{aligned}
    \label{eq:app_trajectory_params}
\end{equation}
i.e., matching per trajectory yields the same parameters as first matching per path, cf.~\eqref{eq:series_rule}.
With $W_\omega=\prod_\ell\gamma_{P_{k_\ell}}S_{k_\ell}$, each term in~\eqref{eq:app_trajectory_expansion} thus becomes $W_\omega f_{\mathrm{IG}}(t;\hat\mu_\omega,\hat\theta_\omega)$, and summing over $\omega$ and $L$ gives~\eqref{eq:closed_loop_mixture}.
This concludes the proof.
\qed

\textit{Remark:}
By Proposition~\ref{lem:series}, each term preserves its mass and its conditional mean and variance. As $\Omega_L$ is finite and the mass, mean, and variance of a finite mixture are determined by those of its components, the approximation preserves the mass and, for every $L$ with nonzero mass, the conditional mean and variance of $h_{\mathrm{CL},L}$. In particular, $\hat h_{\mathrm{CL}}$ and $h_{\mathrm{CL}}$ have the same mass in every pass, such that~\eqref{eq:expected_crossings} holds for $\hat h_{\mathrm{CL}}$ if $S_{\mathrm{loop}}<1$. For uniform degradation, Corollary~\ref{cor:characteristics}(ii) generally holds for $\hat h_{\mathrm{CL}}$ only approximately, since exponential tilting and moment matching need not commute. They commute for $\kappa=0$ or when all traversed \ac{IG} segments have the same degradation-free scale.

\bibliography{bibliography}

\end{document}